\documentclass[a4paper,notitlepage]{article}

\usepackage[utf8]{inputenc}
\usepackage{times}

\usepackage{graphicx}%needed for images
\usepackage{xspace}%neded for todos

\usepackage{wrapfig}%needed for quadtree float
\usepackage{footnote}%needed for footnote in table
\usepackage{placeins}%needed for FloatBarrier in appendix

\usepackage{subcaption}%needed for subfigures
\usepackage{amsmath}%needed for mathematical symbols
\usepackage{amssymb}%also symbols
\usepackage{amsthm}
\usepackage{fancyhdr}%

\usepackage{enumitem}
\usepackage{pgfplots}%all the plots
\pgfplotsset{compat=1.9}
\usepackage{hyperref}%references are links
\usepackage[ruled,vlined,linesnumbered]{algorithm2e}%algorithms
\usepackage{microtype}%stretch and shrink characters for less badboxes and slightly shorter text
%\makeindex
%\usepackage[subtle,mathspacing=normal]{savetrees} % auskommentiert, da sonst \log seltsam aussieht

\usepackage[numbers]{natbib}

%----- Include macros for abbreviations, acronyms and the like
%\newtheorem{observation}[theorem]{Observation}
%\newtheorem{proposition}[theorem]{Proposition}

%\newcommand{\degree}{\operatorname{maxdeg}}

\newcommand{\dist}{\operatorname{dist}}

\newcommand{\ie}{i.\,e.\xspace}

\newcommand{\eg}{e.\,g.\xspace}

\newcommand{\wrt}{w.\,r.\,t.\xspace}

\newcommand{\Pro}[1]{\mathbf{Pr} \left[\,#1\,\right]}

\newcommand{\pointset}{\ensuremath{\mathrm{P}}\xspace}
\newcommand{\neighborset}{\ensuremath{\mathrm{N}}}
\newcommand{\candidateset}{\ensuremath{\mathrm{Candidates}}}
\newcommand{\cellset}{\ensuremath{\mathrm{Cells}}}
\newcommand{\overhangset}{\ensuremath{\mathrm{Overhang}}\xspace}

\newcommand{\ringset}{\ensuremath{\varsigma}\xspace}

\definecolor{markedcolor}{RGB}{31,120,180}
\definecolor{plottinggreen}{RGB}{178,223,138}
\definecolor{thirdhue}{RGB}{228,26,28}

%------ This set of macros allows to have colored comments which can be thrown out easily

\newcommand{\old}[1]{}

\newcommand{\acosh}{\ensuremath{\mathrm{acosh}}}
\newcommand{\asinh}{\ensuremath{\mathrm{asinh}}}

%------ This set of macros generates the daytime and puts it to the headline
\newcount\shortyear\newcount\shorthour\newcount\shortminute
\shorthour=\time\divide\shorthour by 60\shortyear=\shorthour
\multiply\shortyear by 60\shortminute=\time\advance\shortminute by
-\shortyear
\shortyear=\year\advance\shortyear by -1900

\def\zeit{\number\shorthour:\ifnum\shortminute<10 0\number\shortminute
\else\number\shortminute\fi}

    %<-- Uncomment this line to remove the daytime!
% \pagestyle{fancy}
\pagestyle{plain}

\DeclareMathOperator*{\argmin}{arg\,min}
\newtheorem{theorem}{Theorem}
\newtheorem{lemma}{Lemma}
\newtheorem{proposition}{Proposition}
\newtheorem{corollary}{Corollary}
\title{Updating Dynamic Random Hyperbolic Graphs in Sublinear Time\thanks{
This work is partially supported by German Research Foundation (DFG) grant ME 3619/3-1 (FINCA) within the Priority Programme 1736 \emph{Algorithms for Big Data}.
It was carried out while the authors were affiliated with the Institute of Theoretical Informatics at Karlsruhe Institute of Technology (KIT).
Authors’ present addresses: M. von Looz and H. Meyerhenke, Institute of Computer Science, University of Cologne, Weyertal 121, 50931 Cologne, Germany;
emails: mloozcor@uni-koeln.de, h.meyerhenke@uni-koeln.de.}}
%\titlenote{\thanks{}}
%\institute{\{moritz.looz-corswarem, meyerhenke\}@kit.edu\\Karlsruhe Institute of Technology (KIT), Germany}
\author{Moritz von Looz \and Henning Meyerhenke}

\begin{document}
\markboth{Moritz von Looz et al.}{Updating Dynamic Random Hyperbolic Graphs in Sublinear Time}
%\terms{Algorithms, Performance, Experimentation}
%\keywords{random hyperbolic graphs, complex networks, hyperbolic geometry, efficient range query, polar quadtree, generative graph model}
%\acmformat{Moritz von Looz and Henning Meyerhenke. 2016. Updating Dynamic Random Hyperbolic Graphs in Sublinear Time.}
\date{}

%\begin{document}
%\begin{bottomstuff}
%Preliminary versions of the results presented in this paper appeared in the proceedings of the 27th International Workshop on Combinatorial Algorithms, IWOCA %2016~\cite{vonLooz2016}
%and the proceedings of the 2016 IEEE High Performance Extreme Computing Conference, HPEC'16~\cite{DBLP:conf/hpec/LoozOLM16}.%
%   This work is partially supported by German Research Foundation (DFG)%
%grant ME 3619/3-1 (FINCA) within the Priority Programme 1736 \emph{Algorithms for Big Data}.%
%This work was carried out while the authors were affiliated with the Institute of Theoretical Informatics at Karlsruhe Institute of Technology (KIT).%
%Authors’ present addresses: M. von Looz and H. Meyerhenke, Institute of Informatics, University of Cologne, Weyertal 121, 50931 Cologne, Germany;%
%emails: mloozcor@uni-koeln.de, h.meyerhenke@uni-koeln.de;
%\end{bottomstuff}
\maketitle

\begin{abstract}
Generative network models play an important role in algorithm development, scaling studies, network analysis, and realistic system benchmarks for graph data sets.
A complex network model gaining considerable popularity builds random hyperbolic graphs, generated by distributing points within a disk in the hyperbolic plane and then adding edges between points with a probability depending on their hyperbolic distance.

We present a dynamic extension to model gradual network change, while preserving at each step the point position probabilities.
To process the dynamic changes efficiently, we formalize the concept of a \emph{probabilistic neighborhood}:
Let $P$ be a set of $n$ points in Euclidean or hyperbolic space, $q$ a query point, $\dist$ a distance metric, and $f : \mathbb{R}^+ \rightarrow [0,1]$ a monotonically decreasing function.
Then, the probabilistic neighborhood $N(q, f)$ of $q$ with respect to $f$ is 
a random subset of $P$ and each point $p \in P$ belongs to $N(q,f)$ with probability $f(\dist(p,q))$.
We present a fast, sublinear-time query algorithm to sample probabilistic neighborhoods from planar point sets.
For certain distributions of planar $P$, we prove that our algorithm answers a query in $O((|N(q,f)| + \sqrt{n})\log n)$ time with high probability.
This enables us to process a node movement in random hyperbolic graphs in sublinear time, resulting in a speedup of about one order of magnitude in practice compared to the fastest previous approach.
Apart from that, our query algorithm is also applicable to Euclidean geometry, making it of independent interest for other sampling or probabilistic spreading scenarios.
\end{abstract}

\section{Introduction}
\label{sec:introduction}
% Context
Relational data of complex relationships often take the form of \emph{complex networks}, graphs with heterogeneous and often hierarchical structure, low diameter, high clustering, and a heavy-tailed degree distribution~\cite{chakrabarti2006graph}.
Examples include social networks, the graph of hyperlinks between websites, protein interaction networks, and infrastructure routing networks on the autonomous system level~\cite{newman2010networks}.
Frequently found properties in generative models for complex networks are non-negligible clustering (ratio of triangles to triads), a pronounced community structure, and a heavy-tailed degree distribution~\cite{chakrabarti2006graph}, such as a power-law.
Moreover, complex networks often have the \emph{small-world property}, \ie the distance between any two vertices is surprisingly small, often even regardless of network size. % and growth.

\subsection{Random Hyperbolic Graphs}
Many properties of networks can be explained with a hidden underlying geometry~\cite{k-gruhs-07}.
While Euclidean geometry is a natural fit for mesh or unit-disk networks, Krioukov et al.~\cite{Krioukov2010} show that properties of complex networks follow naturally from \emph{hyperbolic} geometry,
which has negative curvature and is the basis for one of the three isotropic spaces.  (The other two are Euclidean (flat) and spherical geometry (positive curvature).)
Their proposed family of \emph{random hyperbolic graphs} (RHG) thus gained considerable interest as a generative network model~\cite{bode2014probability,DBLP:conf/icalp/GugelmannPP12,raey}.
These graphs offer a power-law degree distribution with an adjustable exponent and a hierarchical community structure. In addition, the clustering coefficient is stable with increasing graph size, in contrast to, for example, the popular R-MAT model~\cite{KoPiPlSe14,DBLP:conf/icalp/GugelmannPP12,2015arXiv150103545V}.

To sample a RHG, vertices are distributed randomly on a hyperbolic disk of given radius $R$ and each pair of vertices is connected by an edge with a probability that decreases with the hyperbolic distance between them.
\citet{papadopoulos2012popularity} argue that, when given in polar coordinates, the angular and radial coordinates model popularity and similarity of vertices in a network.
Due to the hyperbolic geometry, vertices close to the center of the disk have a higher degree than vertices on the outer rim of the hyperbolic disk.
If two nodes have similar angular coordinates, they are more likely to be connected and have similar neighborhoods.

For the static model of RHGs, \citet{bringmann2015geometric}. describe an algorithm to sample graphs in expected linear time; it uses a cell data structure and iterates over pairs of cells to generate the edges.
This algorithm and the implementation of \citet{blsius_et_al:LIPIcs:2016:6367} are an improvement over our previous work, which generated static RHGs with $n$ vertices and $m$ edges in $O((n^{3/2}+m)\log n)$~\cite{Looz2015HRG}.
Our previous work used a polar quadtree on the Poincaré disk model of hyperbolic space to provide range queries for a restricted model of RHGs, an idea we extend in this paper.

Many networks are not static but change over time, however, thus giving rise to the field of \emph{dynamic network analysis} and the need for dynamic generative models.
Similar to static generative models, they fulfill an analogous role in the development, benchmarking, and scaling studies of dynamic graph algorithms.
\citet{Papadopoulos2010} consider a dynamic model for RHGs in which vertices are deleted and reinserted at random positions to model sudden site outages and additions in an infrastructure network.
In many complex networks, e.\ g.,\ social networks, changes happen more gradually, though.

We thus introduce a dynamic model with gradual node movement, which reflects for example vertices becoming similar and their neighborhoods merging, or vertices waxing and waning in popularity.
The position probability distribution of points is kept; each snapshot of the dynamic model is a random hyperbolic graph with the usual properties.

\subsection{Probabilistic Neighborhood Queries}
While efficient sampling algorithms for static RHGs exist~\cite{bringmann2015geometric}, resampling the neighborhood of a moved node in a dynamic model required linear time so far.

This task of sampling a neighborhood whose elements are probabilistic not only occurs in dynamic random hyperbolic graphs.
Connection probabilities depending on the distance frequently happen in Euclidean applications as well:
The probability that a customer shops at a certain physical store shrinks with increasing distance to it.
In disease simulations, if the social interaction graph is unknown but locations are available, disease transmission 
can be modeled as a random process with infection risk decreasing with distance.
Moreover, the wireless connections between units in an ad-hoc network are fragile and collapse more frequently with higher distance.

To generalize these scenarios, we define the notion of a \emph{probabilistic neighborhood} in spatial data sets, both Euclidean and hyperbolic:
Let a set $P$ of $n$ points in $\mathbb{R}^d$, a query point $q \in \mathbb{R}^d$, a distance metric $\dist$, 
and a monotonically decreasing function $f : \mathbb{R}^+ \rightarrow [0,1]$ be given.
Then, the probabilistic neighborhood $N(q, f)$ of $q$ with respect to $f$ is 
a random subset of $P$ and each point $p \in P$ belongs to $N(q,f)$ with probability $f(\dist(p,q))$.
A straightforward query algorithm for sampling a probabilistic neighborhood
would iterate over each point $p \in P$ and sample for each whether it is included in $\neighborset(q, f)$.
This has a running time of $\Theta(n \cdot d)$ per query point, which is prohibitive for repeated queries in large data sets.
Thus we are interested in a faster algorithm for such a \emph{probabilistic neighborhood query} (PNQ, spoken as ``pink'').
We restrict ourselves to the planar case in this work, but the algorithmic principle is generalizable to higher dimensions.

Since the neighborhood of a vertex in a random hyperbolic graph is an instance of such a probabilistic neighborhood,
we can use a fast PNQ query method to support a faster dynamic generative model.

\subsection{Outline and Contribution}
After introducing notation and related work (both Section~\ref{sec:preliminaries}), we describe a dynamic model for random hyperbolic graphs with gradual change (Section~\ref{sec:dynamic-model}).
We then develop, analyze, implement, and evaluate a quadtree-based index structure (Section~\ref{sub:qt-augment}) 
and query algorithm (Sections~\ref{sub:baseline-algo} and~\ref{sec:subtree-aggr}) that 
together provide sublinear probabilistic neighborhood queries in the Euclidean and hyperbolic plane.
These can be used to generate updates to random hyperbolic graphs as described above,
but can be of independent interest for completely different application areas
as well (for a simple disease simulation, cf.\ the conference version~\cite{vonLooz2016}).

With some assumptions (which are fulfilled in random hyperbolic graph generation), we show a time complexity of $O((|\neighborset(q, f)| + \sqrt{n})\log n)$ with high probability (whp)\footnote{\emph{With high probability} denotes a probability $\geq 1-1/n$ for $n$ sufficiently large.}
to find a probabilistic neighborhood $\neighborset(q,f)$ among $n$ points (Section~\ref{subsec:subtree-aggregation-complexity}).
We implement the dynamic updates in both our data structure and the data structure of \citet{blsius_et_al:LIPIcs:2016:6367}.
On our data structure, processing a node movement is up to two orders of magnitude faster, in the order of milliseconds for graphs with hundreds of milllions of vertices.

Both the generator code and the network analysis modules are available in version 4.1 of the toolkit NetworKit~\cite{networkit2016journal}, which is aimed at large-scale network analysis.

%%%%%%%%%%%%%%%%%%%%%%%%%%%%%%%%%%%%%
\section{Preliminaries}
\label{sec:preliminaries}

\subsection{Notation}
\label{sub:notation}
\newcommand{\hyperbolic}{\ensuremath{\mathbb{H}}}
\newcommand{\Euclidean}{\ensuremath{\mathbb{E}}}
We use the usual graph notation of a graph $G$ consisting of a vertex set $V$ consisting of $n$ points and an edge set $E\subseteq V\times V$ with $m$ edges. 

For the probabilistic neighborhood queries, let the input be given as set $\pointset$ of $n$ points. 
The points in $\pointset$ are distributed in a disk $\mathbb{D}_R$ of radius $R$ in the hyperbolic or Euclidean plane, the distribution is given by a probability density function $j(\phi, r)$ for an angle $\phi$ and a radius $r$.
For our theoretical results to hold, we require $j$ to be known, continuous and integrable.
Furthermore, $j$ needs to be rotationally invariant -- meaning that $j(\phi_1, r) = j(\phi_2, r)$ for any radius $r$ and any two angles $\phi_1$ and $\phi_2$ -- and positive within  $\mathbb{D}_R$,
so that $j(r) > 0 \Leftrightarrow r < R$.
Due to the rotational invariance, $j(\phi, r)$ is the same for every $\phi$ and we can write $j(r)$.
Likewise, we define $J(r)$ as the indefinite integral of $j(r)$ and normalize it so that $J(R) = 1$ (also implying $J(0) = 0$). The value $J(r)$ then gives the fraction of probability mass inside radius $r$.

For the distance between two points $p_1$ and $p_2$, we use $\dist_\hyperbolic{}(p_1, p_2)$ for the hyperbolic and $\dist_\Euclidean{}(p_1, p_2)$ for the Euclidean case.
We may omit the index if a distinction is unnecessary.
As mentioned, a point $p$ is in the probabilistic neighborhood of query point $q$ with probability $f(\dist(p, q))$.
Thus, a \emph{query pair} consists of a query point $q$ and a function $f : \mathbb{R}^+ \rightarrow [0,1]$ that maps distances to probabilities.
The function $f$ needs to be monotonically decreasing but may be discontinuous, requirements that are fulfilled for the application of random hyperbolic graphs.
(Note that $f$ can be defined differently for each query.
This might be useful when applying PNQs to spatial data sets, where after one preprocessing step queries of different types can be handled without changing the data structure.)
The query result, the probabilistic neighborhood of $q$ \wrt $f$, is denoted by the set $\neighborset(q,f) \subseteq P$.
For the algorithm analysis, we use two additional sets for each query $(q,f)$:
\begin{itemize}
 \item $\candidateset(q,f)$: neighbor candidates examined when executing such a query,
 \item $\cellset(q, f)$: quadtree cells examined during execution of the query.
\end{itemize}
Note that the sets $\neighborset(q,f), \candidateset(q,f)$ and $\cellset(q,f)$ are probabilistic, thus theoretical results about their size are usually only with high probability.

\subsection{Related Range Queries}
\label{sec:related-range-queries}

Since PNQs are applicable to other problems in for example Euclidean spatial databases, we discuss related query algorithms and data structures.

\paragraph{Fast Deterministic Range Queries}
Numerous index structures for fast range queries on spatial data exist.
Many such index structures are based on trees or variations 
thereof, see Samet's book~\cite{Samet:2005:FMM:1076819} for a comprehensive overview.
I/O efficient worst case analysis is usually performed using the EM model,
see \eg~\cite{Arge:2012:ISD:2367574.2367575}. In more applied settings, average-case performance is of
higher importance, which popularized R-trees or newer variants thereof, \eg~\cite{Kamel:1994:HRI:645920.673001}.
Concerning (balanced) quadtrees and kd-trees for spatial dimension $d$, it is known that queries require $O(d \cdot n^{1-1/d})$ time
(thus $O(\sqrt{n})$ in the planar case)~\cite[Ch.~1.4]{Samet:2005:FMM:1076819}.
Regarding PNQs our algorithm matches this query complexity up to a logarithmic factor.
Yet note that, since for general $f$ and $\dist$ in our scenario all points in the set $P$ could be neighbors, 
data structures for deterministic queries cannot solve a PNQ efficiently without adaptations.

\citet{Hu2014independent} give a query sampling algorithm for one-dimensional data that,
given a set $P$ of n points in $\mathbb{R}$,  an interval $q = [x,y]$ and an integer, $t \geq 1$, returns $t$ elements uniformly sampled from $P \cap q$.
They describe a structure of $O(n)$ space that answers a query in $O(\log n + t)$ time and supports updates in $O(\log n)$ time.
While also offering query sampling, PNQs differ from the problem considered by Hu et al. in two aspects: We consider  not only the $1$-dimensional case, and our sampling probabilities (user-defined with a distance-dependent function) are not necessarily uniform.

\paragraph{Range Queries on Uncertain Data}
During the previous decade probabilistic queries \emph{different} from PNQs have become popular.
The main scenarios can be put into two categories~\cite{pei2008query}: (i) Probabilistic databases contain entries
that come with a specified confidence (\eg sensor data whose accuracy is uncertain) and
(ii) objects with an uncertain location, \ie the location is specified by a probability distribution.
Both scenarios differ under typical and reasonable assumptions from ours: 
Queries for uncertain data are usually formulated to return \emph{all} points in the neighborhood
whose confidence/probability exceeds a certain threshold~\cite{kriegel2007probabilistic},
or computing points that are possibly nearest neighbors~\cite{agarwal2013nearest}.

In our model, in turn, the choice of inclusion of a point $p$ is a random choice for every different $p$. In particular, 
depending on the probability distribution, \emph{all}
nodes in the plane can have positive probability to be part of some other's neighborhood.
In the related scenarios this would only be true with extremely small confidence values or extremely
large query circles.

\subsection{Graphs in Hyperbolic Geometry}
\label{sub:hyperbolic-introduction}
\citet{Krioukov2010} relate complex networks with hierarchical structures to hyperbolic geometry and introduce the family of random hyperbolic graphs, which develop a power-law degree distribution, high clustering and other properties of complex networks simply from their geometry.
Numerous other generative graph models, including ones for complex networks and not based on geometry, exist. For a short overview
cf.\ \cite{Staudt2017}.
All these models cover different aspects of network formation and the graphs generated by them have systematically different properties.
No model is widely accepted as covering the majority of use cases.

In the RHG model by~\citet{Krioukov2010}, vertices are generated as points in polar coordinates $(\phi, r)$ on a disk of radius $R$ in the hyperbolic plane with curvature $-\zeta^2$.
We denote this disk with $\mathbb{D}_R$.
The angular coordinate $\phi$ is drawn from a uniform distribution over $[0,2\pi]$, while the probability density for the radial coordinate $r$ is given by~\citet[Eq.~(17)]{Krioukov2010} and controlled by a dispersion parameter $\alpha$:
\begin{equation}
 f(r) = \alpha\frac{\sinh(\alpha r)}{\cosh(\alpha R)-1}
 \label{eq:base-radial-distribution}
\end{equation}
For $\alpha=1$, this yields a uniform distribution on the hyperbolic plane within $\mathbb{D}_R$.
For lower values of $\alpha$, vertices are more likely to be in the center, for higher values more likely at the border of $\mathbb{D}_R$.

We denote the hyperbolic distance between two points $p_1$ and $p_2$ with $\mathrm{dist}_{\mathbb{H}}(p_1,p_2)$.
In the model, any two vertices $u$ and $v$ are connected by an edge with a probability depending on their distance, given by Eq.~\eqref{eq:Krioukov-equation} and parametrized by a temperature $T$.
\begin{equation}
 p(\{u,v\}\in E) = (1+e^{(1/T)\cdot(\mathrm{dist}_{\mathbb{H}}(u,v)-R)/2})^{-1}
 \label{eq:Krioukov-equation}
\end{equation}
For the limiting case of $T=0$, the neighborhood of a point consists of exactly those points within a hyperbolic circle of radius $R$, giving rise to the name \emph{threshold random hyperbolic graphs}.
In the other extreme of $T=\infty$, the geometry's influence vanishes and the resulting model resembles the Erdos-Reyni-model with a binomial degree distribution.
In this work, we consider the general case of $0 \leq T < \infty$ unless otherwise noted.

Several works have analyzed the properties of the resulting graphs theoretically.
\citet[Eq.~(29)]{Krioukov2010} show that for $\alpha/\zeta \geq \frac{1}{2}$, the degree distribution follows a power law with exponent $\gamma := 2\cdot \alpha/\zeta +1$.
\citet{DBLP:conf/icalp/GugelmannPP12} prove non-vanishing clustering and a low variation of the clustering coefficient.
\citet{raey} discuss the size of the giant component and the probability that the graph is connected~\cite{bode2014probability}.
They also show~\cite{bode2014probability} that the curvature parameter $\zeta$ can be fixed while retaining all degrees of freedom, leading us to assume $\zeta=1$ from now on without loss of generality.
For $2 < \gamma < 3$, \citet{kiwi2015bound} bound the diameter asymptotically almost surely to $O((\log n)^{32/((5-\gamma)(3-\gamma))})$ and \citet{Friedrich2015} improve that bound to $O((\log n)^{2/(3-\gamma)})$.
The average degree $\overline{k}$ of a random hyperbolic graph is controlled with the radius $R$, using an approximation given by~\citet[Eq.~(22)]{Krioukov2010}.
This radius is commonly set to $R = 2 \log n + C$ with a user-defined constant $C$, leading to a stable average degree with changing graph size.
\citet{Krioukov2010} choose a polar representation of the hyperbolic plane in which the radial coordinate $r_\mathbb{H}$ of a point $p_\mathbb{H} = (\phi_\mathbb{H}, r_\mathbb{H})$ is set to the hyperbolic distance to the origin: \( r_\mathbb{H} = \mathrm{dist}_{\mathcal{\mathbb{H}}}(p_\mathbb{H},(0,0))\).
They call this representation the \emph{native representation}.
An example graph with 500 vertices, $R\approx 5.08$, $T=0$ and $\alpha=0.8$ in this representation is shown in Figure~\ref{fig:hyperbolic-graph-native}. 
For the purpose of illustration in the figure, we choose a vertex $u$ (the bold blue vertex) and an artificially small example neighborhood, adding edges $(u,v)$ for all vertices $v$ where $\mathrm{dist}_{\mathbb{H}}(u,v) \leq 0.2\cdot R$\footnote{Depicting neighborhoods as in the actual model of RHGs would result in graph too dense to be useful in a visualization.}.
The neighborhood of $u$ then consists of vertices within a hyperbolic circle (marked in blue).
\begin{figure}
\centering
\includegraphics[width=0.5\linewidth]{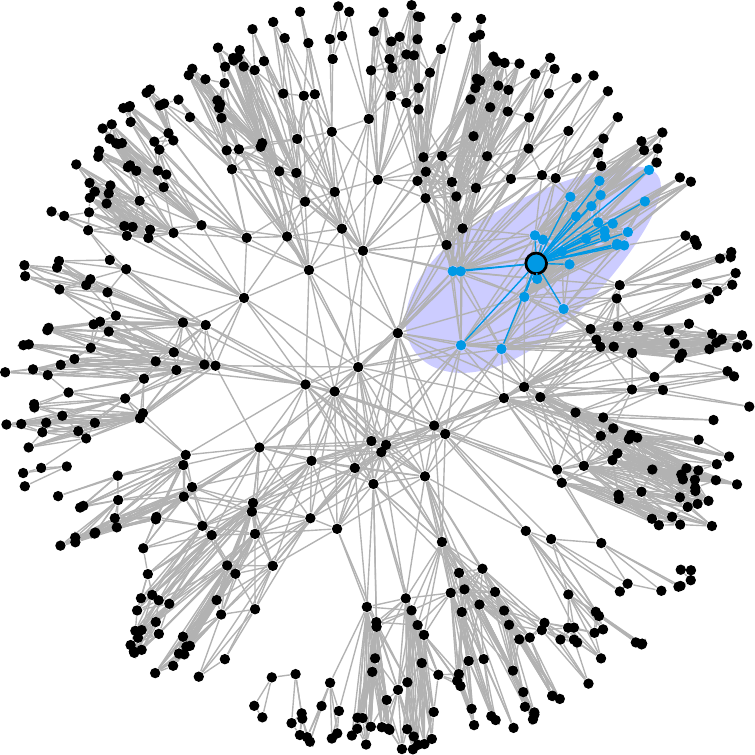}
\caption{Graph in hyperbolic geometry with unit-disk neighborhood. Neighbors of the bold blue vertex are in the hyperbolic circle, marked in blue.
In this visualization, an edge $(u,v)$ is only added if $\dist_\hyperbolic{}(u,v)\leq 0.2R$.}
\label{fig:hyperbolic-graph-native} 
\end{figure}

\subsection{Fast Graph Generation}
As discussed in Section~\ref{sec:introduction}, one application for PNQs are dynamic updates for random hyperbolic graphs.

In previous work we designed a static 
generator~\cite{Looz2015HRG} for a restricted model; a query also runs in $O((n^{3/2}+\neighborset)\log n)$ whp, leading to a time of $O((n^{3/2}+m) \log n)$ whp for the whole graph with $m$ edges.
The range queries discussed there are facilitated by a polar quadtree which supports only deterministic queries. Consequently, the queries result in unit-disk graphs in the hyperbolic plane 
and can be considered as a special case of the current work (a step function $f$ with values 0 and 1 results in a deterministic query).

Our major technical inspiration for enhancing the quadtree for probabilistic neighborhoods is the work of \citet{batagelj2005efficient}. They were the first to present a random sampling 
method to generate Erd\H{o}s-R\'{e}nyi-graphs with $n$ nodes and $m$ edges in $O(n + m)$ time complexity.
Faced with a similar problem of selecting each of $n^2$ possible edges with a constant probability $p$, they designed an 
efficient algorithm with the following idea:
since the gaps between independently selected elements follow a geometric distribution, it is possible to simply sample the gaps from the geometric distribution directly and skip those elements that are not selected while attaining the same probabilities.

\subsection{Quadtree Specifics}
\label{sub:prelim-quadtree}
Our key data structure for fast queries is a region quadtree in the Euclidean or hyperbolic plane.\footnote{Driven by the requirements of random hyperbolic graphs, we use a polar quadtree.}
While they are less suited to higher dimensions as for example k-d-trees, the complexity is comparable in the plane.
For the (circular) range queries we discuss, quadtrees have the significant advantage of a bounded aspect ratio:
A cell in a $k$-d-tree might extend arbitrarily far in one direction, rendering theoretical guarantees about the area affected by the query circle difficult.
In contrast, the region covered by a quadtree cell is determined by its position and level.
We mostly reuse our previous definition~\cite{Looz2015HRG} of the quadtree:
A node in the quadtree is defined as a tuple $(\mathrm{min}_\phi, \mathrm{max}_\phi, \mathrm{min}_r, \mathrm{max}_r)$
with \(\mathrm{min}_\phi \leq \mathrm{max}_\phi\) and \(\mathrm{min}_r \leq \mathrm{max}_r\).
It is responsible for a point $p = (\phi_p, r_p)$ exactly if
$(\mathrm{min}_\phi \leq \phi_p < \mathrm{max}_\phi)$ and $(\mathrm{min}_r \leq r_p < \mathrm{max}_r)$.
We call the region represented by a particular quadtree node its quadtree \emph{cell}.
The quadtree is parametrized by its radius $R$, the $\mathrm{max}_r$ of the root cell.
If the probability distribution $j$ is known (which we assume for our theoretical results), 
we set the radius $R$ to $\argmin_r J(r) = 1$, \ie to the minimum radius that contains the full probability mass.
If only the points are known, the radius is set to include all of them.
% While in this latter case the complexity analysis of Section~\ref{sec:baseline} and~\ref{sec:subtree-aggr} does not hold,
% fast running times in practice can still be achieved (see Section~\ref{sec:applications}).

%%%%%%%%%%%%%%%%%%%%%%%%%%%%%

%%%%%%%%%%%%%%%%%%%%%%%%%%%%%%%%%%%%%%%%%%%%

\section{Dynamic Model}
\label{sec:dynamic-model}
\citet{Papadopoulos2010} examine greedy routing in random hyperbolic graphs and for this purpose propose a dynamic model in which nodes join the network at the border of a growing disk and depart randomly.
While this is a suitable dynamic behavior for modeling internet infrastructure with sudden site failures or additions,
change in e.\ g.,\ social networks happens more gradually; people hardly leave society completely and rejoin it at a random position.

To model such a gradual change in networks, we design and implement a dynamic version with node movement.
Such a model should fulfill several objectives:
First, it should be \emph{consistent}: After moving a node, the network may change, but properties should stay the same \emph{in expectation}.
Since the properties emerge from the node positions, the probability distribution of node positions needs to be preserved.
Second, the movement should be \emph{directed}: If the movement direction of a node at time $t$ is completely independent from the direction at $t+1$,
the result would be a simulated Brownian motion with the same links vanishing and reappearing repeatedly.

In our implementation, movement happens in discrete time steps.
We attain the first objective by scaling the movements along the radial axis, as given by Theorem~\ref{thm:dynamic-consistency}.
The second objective is fulfilled by initially setting step values $\tau_\phi$ and $\tau_r$ for each node and using them in each movement step.
As a result, if a node $i$ moves in a certain direction at time $t$, it will move in the same direction at $t+1$,
except if the new position would be outside the hyperbolic disk $\mathcal{D}_R$.
In this case, the movement is inverted and the node ``bounces'' off the boundary.
The different probability densities in the center of the disk and the outer regions are translated into movement speed:
A node is less likely to be in the center; thus it needs to spend less time there while traversing it, resulting in a higher speed.

We implement this movement in two phases:
In the initialization, step values $\tau_\phi$ and $\tau_r$ are assigned to each node according to the desired movement.
Each movement step of a node then consists of a rotation and a radial movement.
This step is described in Algorithm~\ref{algo:movement}; a visualization of the radial movement is shown in Figure~\ref{fig:visualization-dynamic}.
\begin{algorithm}[h]
\KwIn{$\phi, r, \tau_\phi, \tau_r, R, \alpha.$}
\KwOut{$\phi_{\text{new}} r_{\text{new}}$}
\begin{enumerate}
\item x = $\sinh(r\cdot\alpha)$\;
\item y = x+$\tau_r$\;
\item z = $\asinh(y)/\alpha$\;
\item $\phi_{\text{new}} = (\phi+\tau_\phi) \bmod 2\pi$
\item \textbf{Return} ($\phi_{\text{new}}$, z)
\end{enumerate}
\caption{move($\phi, r$) - Movement step in dynamic model}
\label{algo:movement}
\end{algorithm}

\begin{figure}[tb]
\centering
\includegraphics[width=0.5\linewidth]{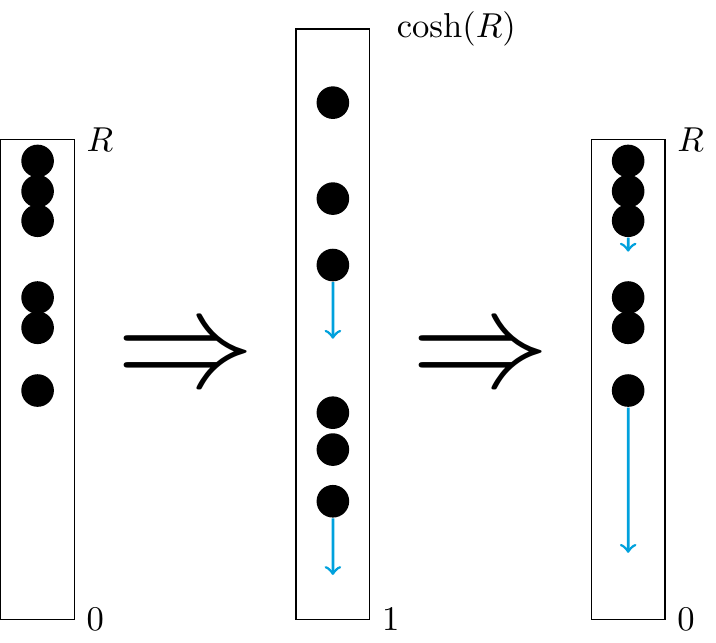}
\caption{For each movement step, radial coordinates are mapped into the interval $[1,\sinh(\alpha R))$, where the coordinate distribution is uniform.
Adding $\tau_r$ and transforming the coordinates back results in correctly scaled movements.}
\label{fig:visualization-dynamic}
\end{figure}

If the new node position would be outside the boundary ($r > R$) or below the origin ($r < 0$), the movement is reflected and $\tau_r$ set to $-\tau_r$.
\begin{theorem}
Let $f_{r,\phi}((p_r, p_\phi))$ be the probability density of point positions, given in polar coordinates.
Let $\mathrm{move}((p_r, p_\phi))$ (Algorithm~\ref{algo:movement}) be a movement step.
Then, the node movement preserves the distribution of angular and radial distributions:
$f_{r,\phi}(\mathrm{move}((p_r, p_\phi))) = f_{r,\phi}((p_r, p_\phi))$.
%$F_X(r) = F_X(\mathrm{scale}(r))$ for $0 \leq r \le R$ and $F_\Phi(\mathrm{rotated}((\phi, r))) = F_\Phi(\phi)$ for $0 \leq \phi \le 2\pi$.
\label{thm:dynamic-consistency}
\end{theorem}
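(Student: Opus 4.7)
The plan is to decompose the joint density using rotational invariance and independence of the two coordinates, then verify that the angular and radial steps each preserve their own marginals.

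By rotational invariance, $f_{r,\phi}(r,\phi) = \tfrac{1}{2\pi}\, f_r(r)$ where $f_r$ is the radial density from Equation~\eqref{eq:base-radial-distribution}, and the two coordinates are independent. The angular update $\phi \mapsto (\phi+\tau_\phi) \bmod 2\pi$ is a translation on the circle, so it preserves the uniform distribution on $[0,2\pi)$ immediately. Since the radial and angular updates in Algorithm~\ref{algo:movement} operate on separate coordinates, preserving $f_{r,\phi}$ reduces to showing that the radial step preserves $f_r$.

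For the radial step, the plan is to view steps~1--3 of Algorithm~\ref{algo:movement} as a composition $r \mapsto h(r) \mapsto h(r)+\tau_r \mapsto h^{-1}(h(r)+\tau_r)$, together with the rule from the sentence following the algorithm that the endpoints of $h([0,R])$ act as reflecting barriers (the $\tau_r\mapsto-\tau_r$ flip). The key observation is that $h$ is chosen so that $h'(r)$ is proportional to $f_r(r)$: by the change-of-variables formula the push-forward density $f_r(h^{-1}(u))/|h'(h^{-1}(u))|$ is then \emph{constant} on $h([0,R])$, i.e.\ $h(r)$ is uniformly distributed on that interval. A uniform variable remains uniform under translation with reflection at the endpoints (a standard folded-translation argument), and applying $h^{-1}$ inverts the first stage, so the output again has density $f_r$. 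Combining this with the angular argument, together with independence of the coordinates, gives $f_{r,\phi}(\mathrm{move}(p_r,p_\phi)) = f_{r,\phi}(p_r,p_\phi)$.

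The main obstacle I expect is the boundary-reflection step: one has to verify carefully that a translation by $\tau_r$ composed with reflection at the endpoints of $h([0,R])$ is a measure-preserving map of the uniform measure, and that this remains true when $|\tau_r|$ is large enough to trigger multiple reflections over successive time steps. A secondary subtlety is the precise identification of $h$ and the verification that $h'(r) \propto \sinh(\alpha r)$ on the nose (up to normalization), so that the induced distribution on $h([0,R])$ is indeed uniform; this is a routine differentiation once $h$ is written explicitly, but it pins down the algebraic form of the coordinate transformation used in the algorithm.
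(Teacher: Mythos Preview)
Your high-level decomposition---independence of coordinates, the angular step as a circle translation, and the radial step as conjugation by a map $h$---matches the paper's structure. However, your ``key observation'' fails for the $h$ actually used in Algorithm~\ref{algo:movement}: with $h(r)=\sinh(\alpha r)$ one has $h'(r)=\alpha\cosh(\alpha r)$, which is \emph{not} proportional to $f_r(r)\propto\sinh(\alpha r)$. The push-forward of $f_r$ under $h$ therefore has density proportional to $y/\sqrt{1+y^2}$ on $[0,\sinh(\alpha R)]$, not the uniform density, so your translation-with-reflection argument does not apply. The step you flagged as a ``routine differentiation'' is exactly where the plan breaks; uniformity would hold had the algorithm used $h(r)=\cosh(\alpha r)$, but it does not.

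The paper's proof takes a different route for the radial part. Rather than asserting single-step measure preservation via uniformity, it tracks the density symbolically through the three stages $X=\sinh(\alpha Q)$, $Y=X+\tau_r$, $Z=\operatorname{asinh}(Y)/\alpha$ and obtains $f_Z(r)=f_Q(r)-\tau_r/(\cosh(\alpha R)-1)$. It then argues that, because the reflection rule flips the sign of $\tau_r$ every $(\cosh(\alpha R)-1)/\tau_r$ steps, this additive discrepancy averages to zero over time. Thus the paper's argument is a time-averaged statement carried by the reflection mechanism, not the per-step invariance you are attempting to establish.
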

\begin{proof}
Since the distributions of angular and radial coordinates are independent, we consider them separately: $f_{r,\phi} (p_r, p_\phi) = f_r (p_r) \cdot f_\phi (p_\phi)$.

As introduced in Eq.~(\ref{eq:base-radial-distribution}), the radial coordinate $r$ is sampled from a distribution with density $\alpha\sinh(\alpha r) / (\cosh(\alpha R) -1)$.
%and $\tau_r$ is the unscaled radial movement parameter. %\ref{eq:radial-distribution}
We introduce random variables $X, Y, Z$ for each step concerning the radial coordinates in Algorithm~\ref{algo:movement}, each is denoted with the upper case letter of its equivalent. 
An additional random variable $Q$ denotes the pre-movement radial coordinate.
The other variables are defined as $X = \sinh(Q\cdot\alpha)$, $Y = X+\tau_r$ and $Z=\asinh(Y)/\alpha$. % = \mathrm{scale}(Q,\tau_r)$.

Let $f_Q, f_X, f_Y$ and $f_Z$ denote the density functions of these variables:
\begin{align}
  f_Q(r) &= \frac{\alpha\sinh(\alpha r)}{\cosh(\alpha R) -1}\\
  f_X(r) &= f_Q\left(\frac{\asinh(r)}{\alpha}\right) = \frac{\alpha r}{\cosh(\alpha R) -1}\\
  f_Y(r) &= f_X(r-\tau_r) = \frac{\alpha r-\tau_r}{\cosh(\alpha R) -1}\\
  f_Z(r) &= f_Y(\sinh(r\cdot\alpha)) = \frac{\alpha \sinh(\alpha r) -\tau_r}{\cosh(\alpha R) -1}  &= f_Q(r) - \frac{\tau_r}{\cosh(\alpha R) -1}
\end{align}

The distributions of $Q$ and $Z$ only differ in the constant addition of $\tau_r / (\cosh(\alpha R)-1)$.
Every $(\cosh(\alpha R)-1) / \tau_r$ steps, the radial movement reaches a limit (0 or $R$) and is reflected, causing $\tau_r$ to be multiplied with -1.
On average, $\tau_r$ is thus zero and $F_Q(r)$ = $F_Z(r)$.

A similar argument works for the rotational step: 
While the rotational direction is unchanged, the change in coordinates is balanced by the addition or subtraction of $2\pi$ whenever the interval $[0,2\pi)$ is left, leading to an average of zero in terms of change.
\end{proof}

Experiments supporting the consistency of properties after node movement can be found in the extended preliminary version of \citet{DBLP:conf/hpec/LoozOLM16}.

%%%%%%%%%%%%%%%%%%%%%%%%%%%%%%%%%%%%%%%%
\section{Baseline Query Algorithm}
\label{sec:baseline}
After moving a vertex, its edges need to be resampled to reflect its new position.
As discussed before, an algorithm to sample probabilistic neighborhood queries can be used to process such an update.
After detailing the construction of the quadtree data structure for this purpose (Section~\ref{sub:qt-augment}) and several theoretical results, we describe a baseline version of such a query algorithm (Section~\ref{sub:baseline-algo}).
This algorithm introduces the main idea, but is asymptotically not faster than the 
straightforward approach of probing every distance and throwing a biased coin. In Section~\ref{sec:subtree-aggr}, the query algorithm is refined to support faster queries.

\subsection{Quadtree Construction}
\label{sub:qt-augment}
%
% \begin{wrapfigure}[21]{R}{0.4\linewidth}
\begin{figure}
 \centering
% \vspace{-1\baselineskip}
 \includegraphics[width=0.5\linewidth]{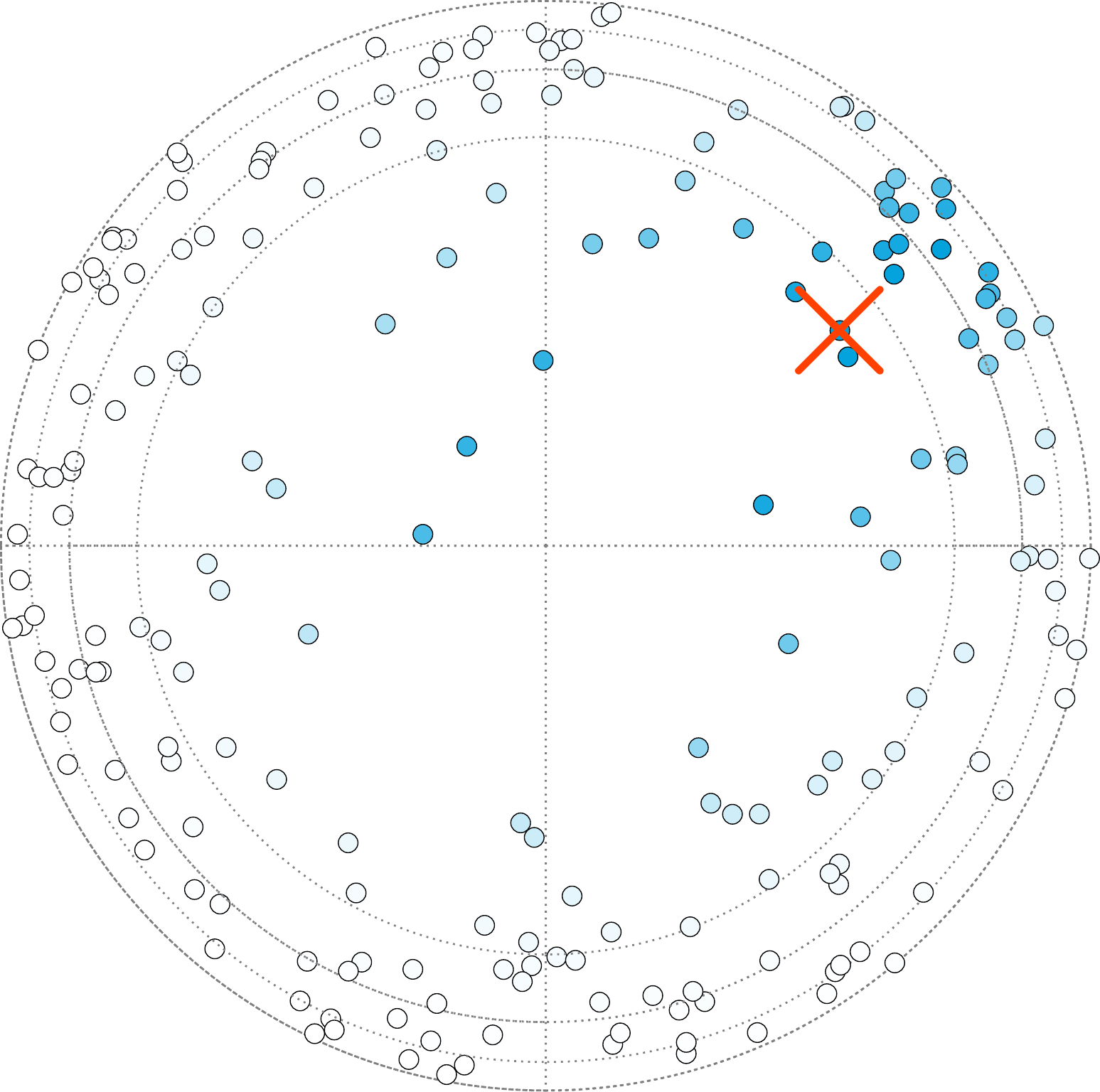}
 \caption{Query over 200 points in a polar hyperbolic quadtree, with $f(d) := 1/(e^{(d-7.78)}+1)$ and the query point $q$ marked by a red cross.
Points are colored according to the probability that they are included in the result. Blue represents a high probability, white a probability of zero.}
\label{fig:visualization-point-probabilities}
\end{figure}

% \end{wrapfigure}

At each quadtree node $v$, we store the size of the subtree rooted there.
%The quadtree adaptations are actually simple. We first augment each quadtree node $v$ with the size of the subtree 
%rooted at $v$.
We then generalize the rule for node splitting to handle point distributions $j$ as defined
in Section~\ref{sub:notation}:
As is usual for quadtrees, a leaf cell $c$ is split into four children when it exceeds its fixed capacity.
Since our quadtree is polar, this split happens once in the angular and once in the radial direction.
Due to the rotational symmetry of $j$, splitting in the angular direction is straightforward as the angle range is halved: $\mathrm{mid}_\phi := \frac{\mathrm{max}_\phi+\mathrm{min}_\phi}{2}$.
For the radial direction, we choose the splitting radius to result in an equal division of probability mass.
The total probability mass in a ring delimited by $\min_r$ and $\max_r$ is $J(\mathrm{max}_r) - J(\mathrm{min}_r)$.
Since $j(r)$ is positive for $r$ between $R$ and 0, the restricted function $J|_{[0,R]}$ defined above is a bijection.
The inverse $(J|_{[0,R]})^{-1}$ thus exists and we set the splitting radius $\mathrm{mid}_r$ to $(J|_{[0,R]})^{-1}\left(\frac{J(\mathrm{max}_r) + J(\mathrm{min}_r)}{2}\right)$.

% For an example of a quasiuniform\footnote{The probability density function is set to $j(r) :=\alpha\frac{\sinh(\alpha r)}{\cosh(\alpha R)-1}$, it is uniform for $\alpha=1$.} distribution within a hyperbolic disk, see \cite[Eq.~(3)]{Looz2015HRG}.
Figure~\ref{fig:visualization-point-probabilities} visualizes a point distribution on a hyperbolic disk with 200 points
and Figure~\ref{fig:quadtree-example} its corresponding quadtree.

\begin{figure}[b]
 \includegraphics[width=\linewidth]{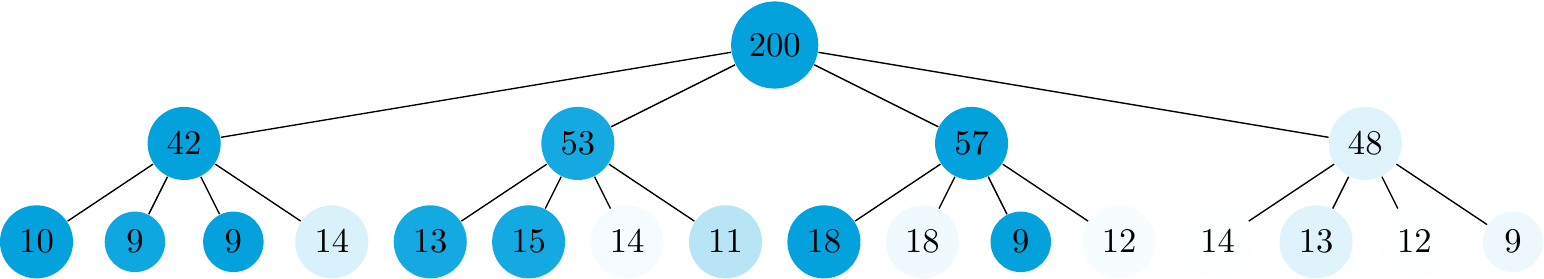}
 \caption{Visualization of the data structure used in Figure~\ref{fig:visualization-point-probabilities}.
 Quadtree nodes are colored according to the upper probability bound for points contained in them.
 The color of a quadtree node $c$ is the darkest possible shade (dark = high probability) of any point contained in the subtree rooted at $c$.
 Each node is marked with the number of points in its subtree.}
  \label{fig:quadtree-example}
 \end{figure}

Three results on quadtree properties help to establish the time complexity of quadtree operations.
They are generalized versions of our previous work~\cite[Lemmas~1 and~2]{Looz2015HRG} and state that (i) each quadtree cell contains the same expected number of points,
that (ii) the quadtree height is $O(\log n)$ whp and that (iii) the expected number of nodes in a quadtree is in $O(n)$.

\begin{lemma}
Let $\mathbb{D}_R$ be a hyperbolic or Euclidean disk of radius $R$, $j$ a probability distribution on $\mathbb{D}_R$ which fulfills the properties defined in Section~\ref{sub:notation}, $p$ a point in $\mathbb{D}_R$ which is sampled from $j$, and $T$ be a polar quadtree on $\mathbb{D}_R$.
Let $C$ be a quadtree cell at depth $i$. Then, the probability that $p$ is in $C$ is $4^{-i}$.
\label{lemma:node-cell-probabilities}
\end{lemma}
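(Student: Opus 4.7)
The plan is to proceed by straightforward induction on the depth $i$, exploiting the fact that the splitting rule was designed precisely so that each child inherits exactly one quarter of its parent's probability mass.

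For the base case $i=0$, the root cell is $\mathbb{D}_R$ itself, and by the normalization $J(R)=1$ we have $\Pro{p \in \mathbb{D}_R} = 1 = 4^{0}$. For the inductive step, assume that every cell $C'$ at depth $i-1$ satisfies $\Pro{p \in C'} = 4^{-(i-1)}$. Given a cell $C$ at depth $i$, let $C'$ be its parent and write $C' = (\mathrm{min}_\phi, \mathrm{max}_\phi, \mathrm{min}_r, \mathrm{max}_r)$ with splitting values $\mathrm{mid}_\phi$ and $\mathrm{mid}_r$ as defined in Section~\ref{sub:qt-augment}. I would then show that each of the four children has mass $\tfrac{1}{4} \cdot 4^{-(i-1)} = 4^{-i}$.

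The argument factors the joint probability over the angular and radial coordinates. Because $j$ is rotationally invariant on $\mathbb{D}_R$, the density written in polar form factors as a product of a uniform angular density on $[0,2\pi)$ and the radial density $j(r)$; consequently, conditioned on $p \in C'$, the angular and radial coordinates of $p$ remain independent. The angular split at $\mathrm{mid}_\phi = (\mathrm{min}_\phi + \mathrm{max}_\phi)/2$ bisects the angular range, so by uniformity each angular half carries probability exactly $1/2$. For the radial split, the construction chooses $\mathrm{mid}_r = (J|_{[0,R]})^{-1}\bigl(\tfrac{J(\mathrm{max}_r) + J(\mathrm{min}_r)}{2}\bigr)$, which by definition of $J$ gives $J(\mathrm{mid}_r) - J(\mathrm{min}_r) = J(\mathrm{max}_r) - J(\mathrm{mid}_r)$, so each radial half likewise carries half of the radial mass of $C'$. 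Multiplying the two independent halves yields factor $1/4$.

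I do not anticipate a significant obstacle; the only points worth checking carefully are that the bijectivity of $J|_{[0,R]}$ (which follows from $j(r) > 0$ on $(0,R)$ as stated in Section~\ref{sub:notation}) actually guarantees that $\mathrm{mid}_r$ exists and is unique, and that the factorization of the joint density into angular and radial parts is legitimate under rotational invariance. Both are immediate from the assumptions on $j$, so the induction closes and gives $\Pro{p \in C} = 4^{-i}$.
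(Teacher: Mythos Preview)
Your proposal is correct and follows essentially the same approach as the paper: induction on depth, with the inductive step showing that each child carries exactly one quarter of its parent's probability mass by separately halving the angular and radial contributions via the splitting rules. The only cosmetic difference is that the paper computes the mass of a specific child (the south-west one) explicitly via the product formula $\tfrac{\max_\phi-\min_\phi}{2\pi}\cdot(J(\max_r)-J(\min_r))$ and then appeals to symmetry, whereas you phrase the same computation as a factorization into independent angular and radial halves.
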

\begin{proof}
Due to the similarity of Lemma~\ref{lemma:node-cell-probabilities} to Lemma 1 of \citet{Looz2015HRG}, the proof follows a similar structure.
Let $C$ be a quadtree cell at level $k$, delimited by $\textnormal{min}_r$, $\textnormal{max}_r$, $\textnormal{min}_\phi$ and $\textnormal{max}_\phi$.
As stated in Section~\ref{sub:notation}, we require the point probability distribution to be rotationally invariant.
The probability that a point $p$ is in $C$ is then given by 
\begin{equation}
 \Pr(p \in C) = \frac{\max_{\phi} - \min_{\phi}}{2\pi} \cdot (J(\textnormal{max}_r) - J(\textnormal{min}_r))\label{eq:cell-probability-mass}.
\end{equation}
The boundaries of the children of $C$ are given by the splitting rules in Section~\ref{sub:qt-augment}.
\begin{align}
 \mathrm{mid}_\phi &:= \frac{\max_\phi+\min_\phi}{2}\label{eq:angular-split} \\
 \mathrm{mid}_r &:= (J|_{[0,R]})^{-1}\left(\frac{J(\mathrm{max}_r) + J(\mathrm{min}_r)}{2}\right)\label{eq:radial-split}
\end{align}
We proceed with induction over the depth $i$ of $C$.
Start of induction ($i$ = 0):
At depth 0, only the root cell exists and covers the whole disk.
Since $C = \mathbb{D}_R$, $\Pr(p \in C) = 1 = 4^{-0}$.

Inductive step ($i \rightarrow i+1$):
Let $C_i$ be a node at depth $i$.
$C_i$ is delimited by the radial boundaries $\mathrm{min}_r$ and $\mathrm{max}_r$, as well as the angular boundaries $\mathrm{min}_\phi$ and $\mathrm{max}_\phi$.
It has four children at depth $i+1$, separated by $\mathrm{mid}_r$ and $\mathrm{mid}_\phi$. Let $SW$ be the south west child of $C_i$.
With Eq.~(\ref{eq:cell-probability-mass}), the probability of $p\in SW$ is:
\begin{equation}
\Pr(p \in SW) = \frac{\mathrm{mid}_{\phi} - \min_{\phi}}{2\pi} \cdot \left(J\left(\mathrm{mid}_r\right) - J\left(\textnormal{min}_r\right)\right)
 \end{equation}.

Using Equations~(\ref{eq:angular-split}) and~(\ref{eq:radial-split}), this results in a probability of 
\begin{align}
\Pr(p \in SW) &= \frac{\frac{\max_\phi+\min_\phi}{2} - \min_{\phi}}{2\pi} \cdot \left(J\left((J|_{[0,R]})^{-1}\left(\frac{J(\mathrm{max}_r) + J(\mathrm{min}_r)}{2}\right)\right) - J(\textnormal{min}_r)\right)\\
\Pr(p \in SW) &= \frac{\frac{\max_\phi+\min_\phi}{2} - \min_{\phi}}{2\pi} \cdot \left(\frac{J(\mathrm{max}_r) + J(\mathrm{min}_r)}{2} - J(\textnormal{min}_r)\right)\\
\Pr(p \in SW) &= \frac{\frac{\max_\phi-\min_\phi}{2}}{2\pi} \cdot \left(\frac{J(\mathrm{max}_r) - J(\mathrm{min}_r)}{2}\right)\\
\Pr(p \in SW) &= \frac{1}{4} \frac{\max_\phi-\min_\phi}{2\pi} \cdot \left(J(\mathrm{max}_r) - J(\mathrm{min}_r)\right)
\end{align}
As per the induction hypothesis, $\Pr(p \in C_i)$ is $4^{-i}$ and $\Pr(p \in SW)$ is thus $\frac{1}{4}\cdot 4^{-i} = 4^{-(i+1)}$.
Due to symmetry when selecting $\mathrm{mid}_\phi$, the same holds for the south east child of $C_i$. Together, they contain half of the probability mass of $C_i$.
Again due to symmetry, the same proof then holds for the northern children as well.
\end{proof}
%%%%%%%%%%%%%%%%%%%%%%%%%%%%%%%%%%%%%%%%%%%%%%%%%%%%%%%%%%%%%%%%%%%%%%%%%%%%%%%%%%%%%%%%%%%%%%%%%%%%%%%%%%%%%%%%%%%%%%%%%%%%%%%%%%%%%%%
%
\begin{proposition}
 \label{THM:QUADTREE-HEIGHT}
Let $\mathbb{D}_R$ and $j$ be as in Lemma~\ref{lemma:node-cell-probabilities}.
Let $T$ be a polar quadtree on $\mathbb{D}_R$ constructed to fit $j$.
Then, for $n$ sufficiently large, $\mathrm{height}(T) \in O(\log n)$ whp.
\end{proposition}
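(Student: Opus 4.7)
The plan is to fix the cell capacity $k$ of the quadtree (a constant set at construction: a leaf splits only when it holds more than $k$ points) and to show that, already at depth $i=\Theta(\log n)$, with probability at least $1-1/n$ no cell contains more than $k$ points, which forces all splits to have occurred at shallower depth.

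First I would apply Lemma~\ref{lemma:node-cell-probabilities}: each of the $n$ points, sampled independently from $j$, lies in a fixed potential cell $C$ at depth $i$ with probability exactly $4^{-i}$. Hence $|P\cap C|$ is distributed as $\mathrm{Binomial}(n, 4^{-i})$. The resulting upper tail estimate I would use is the elementary bound
\[
\Pr\bigl[\,|P\cap C|\geq k+1\,\bigr]\;\leq\;\binom{n}{k+1}\bigl(4^{-i}\bigr)^{k+1}\;\leq\;\frac{(n\cdot 4^{-i})^{k+1}}{(k+1)!}.
\]
Setting $i=c\log_4 n$ for a constant $c$ to be chosen, this becomes $n^{(1-c)(k+1)}/(k+1)!$.

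Next I would take a union bound over all potential cells at depth $i$. Since every cell has exactly four children, there are at most $4^i=n^c$ potential cells at depth $i$, independently of which of them actually materialise in $T$. The probability that any such cell violates the capacity is therefore at most
\[
n^c\cdot\frac{n^{(1-c)(k+1)}}{(k+1)!}\;=\;\frac{n^{\,k+1-ck}}{(k+1)!}.
\]
Choosing any constant $c>(k+2)/k$ makes the exponent at most $-1$, so for $n$ sufficiently large this probability is at most $1/n$. On this high-probability event every cell at depth $\lceil c\log_4 n\rceil$ contains at most $k$ points, no such cell is split, and hence $\mathrm{height}(T)\leq c\log_4 n\in O(\log n)$.

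The main obstacle is cosmetic rather than conceptual: keeping the Chernoff/union-bound exponents aligned so that $k+1-ck\leq -1$ requires $c$ to depend on the capacity $k$, and one must argue that this is legitimate because $k$ is a fixed parameter of the data structure, not something growing with $n$. A secondary point that deserves a sentence in the write-up is that the bound $4^i$ on the number of cells at depth $i$ is a bound on \emph{potential} cells in the recursive 4-ary subdivision of $\mathbb{D}_R$ rather than on cells actually realised by the construction, so the union bound is valid without conditioning on the random shape of $T$.
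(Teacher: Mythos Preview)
Your proposal is correct and follows essentially the same approach as the paper. The paper's proof (given in the appendix) fixes the depth at $3\lceil\log_4 n\rceil$, invokes Lemma~\ref{lemma:node-cell-probabilities} to get uniform cell probabilities, and then cites an external balls-in-bins lemma to conclude that no cell at that depth contains more than one point; your version is the same binomial-tail-plus-union-bound argument, just carried out explicitly and with the leaf capacity $k$ kept as a parameter when choosing the depth constant $c$.
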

\begin{proof}
We proved a similar lemma in previous work~\cite{Looz2015HRG}, for hyperbolic geometry only and a restricted family of probability distributions.
The requirement for that proof was that a given point $p$ has a probability of $4^{-i}$ to land in a given cell at depth $i$.
In Lemma~\ref{lemma:node-cell-probabilities}, we show that this requirement is fulfilled for the quadtrees used in this paper in both Euclidean and hyperbolic geometry.
We can thus reuse the proof of Lemma 2 of \citet{Looz2015HRG}, which we include in Appendix~\ref{sec:proof-thm-quadtree-height} since it was omitted from the conference version due to space constraints.
\end{proof}

\begin{lemma}
Just as in Lemma~\ref{lemma:node-cell-probabilities}, 
let $\mathbb{D}_R $ be a hyperbolic or Euclidean disk of radius $R$, $j$ a probability distribution on $\mathbb{D}_R$ which fulfills the properties defined in Section~\ref{sub:notation}, and $T$ be a polar quadtree on $\mathbb{D}_R$.
The expected number of  nodes in $T$ is then in $O(n)$.
\label{lemma:bound-number-quadtree-cells}
\end{lemma}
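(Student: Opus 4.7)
The plan is to bound the expected number of internal nodes $E[I]$ and then conclude via the region-quadtree identity $|T| = 4I + 1$, which holds because every internal node has exactly four children. Hence it suffices to prove $E[I] = O(n)$.

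By Lemma~\ref{lemma:node-cell-probabilities}, each of the $n$ points falls into any fixed depth-$d$ cell independently with probability $4^{-d}$, so the number of points in a given depth-$d$ cell is distributed as $\mathrm{Bin}(n, 4^{-d})$. Let $k$ denote the (constant) leaf capacity; a cell is internal exactly when it contains more than $k$ points. Since there are at most $4^d$ cells at depth $d$, linearity of expectation gives
\begin{equation}
E[I_d] \le 4^d \cdot \Pr[\mathrm{Bin}(n, 4^{-d}) > k].
\end{equation}
I would split the sum $E[I] = \sum_{d \ge 0} E[I_d]$ at the threshold $d^* := \lceil \log_4(n/k) \rceil$, at which the per-cell mean $\mu_d = n/4^d$ crosses $k$. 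For the shallow regime $d \le d^*$ the trivial estimate $\Pr[\cdot] \le 1$ together with $\sum_{d \le d^*} 4^d = O(4^{d^*}) = O(n/k)$ already yields an $O(n)$ contribution. For the deep regime $d > d^*$ the mean $\mu_d$ lies below $k$, and a Chernoff-type upper tail $\Pr[\mathrm{Bin}(n,4^{-d}) > k] \le (e\mu_d/k)^k$ makes $E[I_d]$ decay like $(e^k n / k) \cdot 4^{-(d-d^*)(k-1)}$. This is a geometric series in $d - d^*$ whose sum is again $O(n)$ for every $k \ge 2$; the boundary case $k = 1$ can be handled by the direct bound $\Pr[\mathrm{Bin}(n,p) \ge 2] \le (np)^2/2$, which produces the same $O(n)$ geometric tail.

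The main technical obstacle is this deep regime: a naive Markov estimate only gives $O(n/k)$ per depth and therefore would inflate to $O(n \log n)$ after summing over the $O(\log n)$ height of $T$ (Proposition~\ref{THM:QUADTREE-HEIGHT}). It is the sharper Chernoff concentration that ensures contributions from depths beyond $d^*$ form a convergent geometric tail rather than accumulating an extra logarithmic factor. Combining the two regimes and applying $|T| = 4I + 1$ then delivers $E[|T|] = O(n)$, as required.
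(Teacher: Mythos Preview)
Your proof is correct and takes a genuinely different route from the paper's.

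The paper argues combinatorially via \emph{excess splits}: a split is called excess if all points of the split cell land in a single child. Using Lemma~\ref{lemma:node-cell-probabilities}, two points in a cell go to different children with probability $3/4$, so the expected number of excess splits attributable to a point is bounded by $\sum_{i\ge 0} i\cdot 4^{-i}=4/9$; linearity gives $O(n)$ expected excess splits. Removing all nodes produced by excess splits leaves a tree in which every internal node has at least two non-empty subtrees, and such a tree with $n$ leaves has $O(n)$ nodes; adding back the excess-split nodes preserves the $O(n)$ bound.

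Your approach instead sums $E[I_d]$ over depths and applies a Chernoff tail past the threshold $d^*$, which is the textbook way to analyse random tries/quadtrees. The trade-offs: the paper's argument is lighter in machinery (no tail bounds, only a geometric series and a branching-factor observation) but is somewhat informal in attributing excess splits to individual points. Your argument is fully rigorous as written, makes the role of the leaf capacity $k$ explicit, and would generalise immediately to branching factors other than~$4$. Both rely on the same key structural fact, Lemma~\ref{lemma:node-cell-probabilities}, and both yield the identical $O(n)$ conclusion.
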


\begin{proof}
A quadtree $T$ containing $n$ points can have at most $n$ non-empty leaf cells. We can thus bound the total number of leaf cells in $T$ by limiting the number of empty cells.

An empty leaf cell occurs when a previous leaf cell $c$ is split.
% Let $c$ be a quadtree cell which is split.
We consider two cases, depending on how many of the children of $c$ contain points:

\textbf{Case 1:} All but one of the children of $c$ are empty and all points in $c$ are concentrated in one child.
We call a split of this kind an \emph{excess} split, since it did not result in dividing the points in $c$.

\textbf{Case 2:} At least two children of $c$ contain points.

The number of excess splits caused by a pair of points depends on the area they are clustered in.
Two sufficiently close points could cause a potentially unbounded number of excess splits.
However, due to Lemma~\ref{lemma:node-cell-probabilities}, each child cell contains a quarter of the probability mass of its parent cell.
Given two points $p,q$ in a cell which is split, they end up in different child cells with probability 3/4.

The expected number of excess splits for a point $p$ is thus at most\footnote{Note that the real number of excess splits might be lower, since a split might separate another point from $p$ and $q$.}
\begin{equation}
\sum_{i=0}^{\infty} i\cdot 4^{-i} = \frac{4}{9}. 
\end{equation}

Due to the linearity of expectations, the expected number of excess splits caused by $n$ points is then at most $4n/9$.
Each excess split causes four additional quadtree nodes, three of them are empty leaf cells.

If we remove all quadtree nodes caused by excess splits and reconnect the tree by connecting the remaining leaves
to their lowest unremoved ancestor, every inner node in the remaining tree $T'$ has at least two non-empty subtrees.
Since a binary tree with $n$ leaves has $O(n)$ inner nodes and the branching factor in $T'$ is at least two, $T'$ also contains at most $O(n)$ inner nodes.

Together with the expected $O(n)$ nodes caused by excess splits, this results in $O(n)$ nodes in $T$ in expectation.
\end{proof}

A direct consequence from the results above and our previous work~\cite{Looz2015HRG} is the preprocessing
time for the quadtree construction. The generalized splitting rule and storing the subtree sizes only change constant factors.

\begin{corollary}
\label{cor:qt-construction}
Since a point insertion takes $O(\log n)$ time whp, constructing a quadtree on $n$ points distributed as %defined 
in Section~\ref{sub:notation} takes $O(n \log n)$ time whp.
\end{corollary}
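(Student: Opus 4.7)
My plan is to combine the three earlier structural results (Lemma on cell probabilities, Proposition on quadtree height, Lemma on bounded number of nodes) with a straightforward accounting of the work per insertion.

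First I would argue that a single point insertion truly runs in $O(\log n)$ time whp. Inserting a point $p$ consists of walking from the root to the leaf cell responsible for $p$ (at each step deciding in $O(1)$ time which of the four children contains $p$ using the splitting rules of Section~\ref{sub:qt-augment}) and, while descending, incrementing the stored subtree size by one. The length of this walk is at most the height of $T$, which is $O(\log n)$ whp by Proposition on quadtree height. If the leaf's capacity is exceeded, we trigger a split, which costs $O(1)$ locally for the newly created children and may cascade if all points of the old leaf fall into the same child; however, the expected total number of quadtree nodes ever created is $O(n)$ by Lemma on bounded number of nodes, so the cumulative split work amortizes to $O(n)$ over the whole construction.

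Second, I would sum over the $n$ insertions. For the walking cost, I would invoke the height bound in its stronger form: ``whp'' in this paper means probability $\ge 1 - n^{-c}$ for a constant $c$ that can be chosen, so a union bound over the $n$ insertions gives that \emph{every} insertion walk has length $O(\log n)$ with probability $\ge 1 - n \cdot n^{-c} = 1 - n^{-(c-1)}$, which is again whp. Adding the $O(n)$ amortized split cost (an expectation bound that can be converted to a whp bound by Markov/Chernoff arguments on the geometric distribution of excess splits in the proof of Lemma on bounded number of nodes) yields a total running time of $O(n \log n) + O(n) = O(n \log n)$ whp.

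The main obstacle, if any, is the interplay between the two kinds of bounds being inherited: the height bound is whp per insertion, while the bound on total nodes is in expectation. I expect this to be resolved cleanly because the corollary only claims the final $O(n \log n)$ bound whp, and the $O(n)$ split contribution is dominated by the $O(n \log n)$ walking contribution anyway; so even if one kept the split cost only in expectation, the dominant whp term swallows it. The rest (constant work per node for updating stored subtree sizes, $O(1)$ work for the generalized radial split using $(J|_{[0,R]})^{-1}$) changes only constant factors, as already noted before the statement of the corollary.
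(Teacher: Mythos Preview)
Your proposal is correct and in fact more detailed than the paper's own treatment: the paper does not give a proof at all, but simply states the corollary as ``a direct consequence from the results above and our previous work~\cite{Looz2015HRG},'' noting only that the generalized splitting rule and storing subtree sizes change constant factors. Your argument (height bound $\Rightarrow$ $O(\log n)$ per insertion whp, union bound over $n$ insertions, amortized $O(n)$ split cost via Lemma~\ref{lemma:bound-number-quadtree-cells}) is exactly the reasoning the paper is implicitly relying on, and your remark about the expectation-vs-whp mismatch for the split cost being harmlessly absorbed into the dominant $O(n\log n)$ term is a useful clarification the paper does not make explicit.
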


\subsection{Algorithm}
\label{sub:baseline-algo}
The baseline version of our query (Algorithm~\ref{algo:quadnode-probabilistic}) has a time complexity of $\Theta(n)$, but serves as a foundation for the fast version (Section~\ref{sec:subtree-aggr}).
It takes as input a query point $q$, a function $f$ and a quadtree cell $c$.
Initially, it is called with the root node of the quadtree and recursively descends the tree.
As we prove in Prop.~\ref{lemma:independent-correct-probabilities}, the algorithm returns a point set $\neighborset(q,f) \subseteq P$ with
\begin{equation}
\label{eq:probneigh}
\Pro{p \in \neighborset(q,f)} =  f(\mathrm{dist}(q,p)).
\end{equation}

\begin{algorithm}[tb]
 \KwIn{query point $q$, prob.\ function $f$, quadtree node $c$}
 \KwOut{probabilistic neighborhood of \textit{q}}
 $\neighborset = \{\}$\;
  $\underline{b} = \dist(q, c$)\;\label{line:distanceLB} \tcc{Euclidean or hyperbolic distance between point and cell, calculated by Algorithms~\ref{algo:hyperbolic-distances} and \ref{algo:Euclidean-polar-distances}.}
 $\overline{b}$=$f(\underline{b}$)\;\label{line:probUB} \tcc{Since $f$ is monotonically decreasing, a lower bound for the distance gives an upper bound $\overline{b}$ for the probability.}
 $s$ = number of points in $c$\;
 \If{$c$ is not leaf}{\label{line:notleafcell}
 \tcc{internal node: descend, add recursive result to local set}
 \For{child $\in$ children($c$)}{
   add getProbabilisticNeighborhood($q$, $f$, child) to \neighborset \;\label{line:calling-child}
 }
}
 \Else{\tcc{leaf case: sample gaps from geometric distribution} 
 \For{i=0; $i < s$ ; i++}{\label{line:loop-iteration} \label{line:candidate-loop}
 $\delta = \ln(1-\mathit{rand}) / \ln(1-\overline{b})$\;\label{line:deltaskip}
 $i$ += $\delta$\;\label{line:jumptarget}
 \If{$i \geq s$}{
 break\;
 }
 $\mathit{prob}$ = $f(\dist(q$, c.points[$i$]))/$\overline{b}$\;\label{line:candidate-confirmation}
 add c.points[$i$] to $\neighborset$ with probability $\mathit{prob}$\label{line:add-confirmed}
 }
 }
 \Return{$\neighborset$}
 \caption{QuadNode.getProbabilisticNeighborhood}
 \label{algo:quadnode-probabilistic}
\end{algorithm}

In the following discussion, line numbers refer to lines of pseudocode in Algorithm~\ref{algo:quadnode-probabilistic}.
This query algorithm descends the quadtree recursively until it reaches the leaves.
Once a leaf $l$ is reached, a lower bound $\underline{b}$ for the distance between the query point $q$ and all the points in $l$ is computed (Line~\ref{line:distanceLB}). Such distance calculations are detailed in Appendix~\ref{sec:distance}.
Since $f$ is monotonically decreasing, this lower bound for the distance gives an upper bound $\overline{b}$ for the probability that a given point in $l$ is a member of the returned point set (Line~\ref{line:probUB}).
This bound is used to select \emph{neighbor candidates} in a similar manner as ~\citet{batagelj2005efficient}: 
In Line~\ref{line:deltaskip}, a random number of vertices is skipped, so that every vertex in $l$ is selected as a neighbor candidate with probability $\mathrm{\overline{b}}$.
The actual distance $\mathrm{dist}(q,a)$ between a candidate $a$ and the query point $q$ is at least $\mathrm{\underline{b}}$ and the probability of $a \in \neighborset(q, f)$ thus at most $\mathrm{\overline{b}}$.
For each candidate, this actual distance $\mathrm{dist}(q,a)$ is then calculated and a neighbor candidate is confirmed as a neighbor with probability $f(\mathrm{dist}(q,a))/\mathrm{\overline{b}}$ in Line~\ref{line:candidate-confirmation}.

Regarding correctness of Algorithm~\ref{algo:quadnode-probabilistic}, we can state:% two results:

\begin{proposition}
Let $T$ be a quadtree as defined above, $q$ be a query point and $f : \mathbb{R}^+ \rightarrow [0,1]$ a monotonically decreasing function
which maps distances to probabilities.
The probability that a point $p$ is returned by a PNQ ($q, f$) from Algorithm~\ref{algo:quadnode-probabilistic} is $f(\text{dist}(q,p))$,
independently from whether other points are returned.
\label{lemma:independent-correct-probabilities}
\end{proposition}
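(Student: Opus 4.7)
The plan is to reduce the statement to a leaf-level argument, since each point $p \in P$ resides in exactly one leaf cell. The recursive calls on disjoint child cells use disjoint random draws, so if I can show the claim for a single leaf, independence across leaves follows immediately from the independence of the random choices made in each recursive branch. The top-level descent in Lines~\ref{line:notleafcell}--\ref{line:calling-child} is deterministic in structure (every non-leaf cell is visited), so no probabilistic event at an internal node can affect whether $p$ is returned.

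Fix a leaf $l$ containing $p$, enumerate its points as $c.\mathit{points}[0],\dots,c.\mathit{points}[s-1]$, and let $p$ occupy index $k$. I first verify that $\overline{b} \in [0,1]$ and that $f(\dist(q,p))/\overline{b} \in [0,1]$: by the distance bound in Line~\ref{line:distanceLB}, $\underline{b} \leq \dist(q,p)$, and by monotonicity of $f$ this gives $\overline{b} = f(\underline{b}) \geq f(\dist(q,p))$, making the ratio in Line~\ref{line:candidate-confirmation} a valid probability.

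The key step is to show that the geometric skip-sampling loop (Lines~\ref{line:candidate-loop}--\ref{line:jumptarget}) selects each index $k \in \{0,\dots,s-1\}$ as a candidate with probability $\overline{b}$, and that these $s$ candidate indicators are mutually independent. This is the standard Batagelj--Brandes trick cited earlier in the paper: if $U$ is uniform on $(0,1)$, then $\lfloor \ln(1-U)/\ln(1-\overline{b}) \rfloor$ is geometrically distributed with parameter $\overline{b}$, so the successive gaps $\delta$ are i.i.d.\ geometric. A short induction on $k$ shows that $\Pro{\text{index }k\text{ is visited}} = \overline{b}$ and that these events form an i.i.d.\ Bernoulli($\overline{b}$) sequence (the standard characterization: gaps geometric $\Leftrightarrow$ independent Bernoulli hits). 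Conditional on index $k$ being a candidate, Line~\ref{line:add-confirmed} adds $p$ with probability $f(\dist(q,p))/\overline{b}$, using a fresh random draw independent of all previous draws. Multiplying the two stages gives
\begin{equation}
\Pro{p \in \neighborset(q,f)} \;=\; \overline{b} \cdot \frac{f(\dist(q,p))}{\overline{b}} \;=\; f(\dist(q,p)),
\end{equation}
as required.

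For independence, I observe that: (i) within $l$, the Bernoulli candidate indicators are independent by the gap-sampling characterization, and each confirmation in Line~\ref{line:add-confirmed} uses a fresh uniform draw, so the $s$ inclusion indicators for points in $l$ are jointly independent; (ii) across different leaves, the recursive calls in Line~\ref{line:calling-child} operate on disjoint point subsets with independent random draws. Combining (i) and (ii) yields joint independence of all inclusion events over $P$. The main obstacle I foresee is stating the geometric-gap-to-independent-Bernoulli equivalence cleanly enough that the induction on $k$ is rigorous rather than hand-waved; the remaining algebraic pieces are routine given monotonicity of $f$ and the distance lower bound established in Line~\ref{line:distanceLB}.
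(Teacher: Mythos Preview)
Your proposal is correct and follows essentially the same approach as the paper: reduce to a single leaf, show that the geometric skip-sampling selects each index as a candidate independently with probability $\overline{b}$, then multiply by the acceptance ratio $f(\dist(q,p))/\overline{b}$. The only cosmetic difference is that the paper spells out the geometric-gap computation and the conditional-probability argument for independence explicitly (its Steps~1 and~2), whereas you invoke the standard ``geometric gaps $\Leftrightarrow$ i.i.d.\ Bernoulli'' characterization; your added checks that $\overline{b}\in[0,1]$ and that independence holds across leaves via disjoint randomness are minor clarifications the paper leaves implicit.
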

\begin{proof}
Algorithm~\ref{algo:quadnode-probabilistic} traverses the whole quadtree with all of its leaves.
Since each leaf is examined, we can concentrate on whether the points are sampled correctly within a leaf cell.
Our proof thus consists of three steps:
1) The probability that the first point in a leaf is a candidate is $\overline{b}$.
2) Given two points $p_i$ and $p_j$ in the same leaf, the probability that $p_i$ is a candidate is independent of whether $p_j$ is a candidate.
3) The probability that a point $p_i$ is a neighbor of the query point $q$ is given by Eq.~(\ref{eq:probneigh}).

Note that the hyperbolic [Euclidean] distances, which are mapped to probabilities according to the function $f$,
are calculated by Algorithm~\ref{algo:hyperbolic-distances}
[Algorithm~\ref{algo:Euclidean-polar-distances}], which are presented in Appendix~\ref{sec:distance} (together with their correctness proofs).
We continue the current proof with details for all three main steps.
\paragraph{Step 1}
Between two points, the jumping width $\delta$ is given by Line~\ref{line:deltaskip} of Algorithm~\ref{algo:quadnode-probabilistic}.
The probability that exactly $i$ points are skipped between two given candidates is $(1-\overline{b})^i\cdot \overline{b}$:
\begin{align}
 \Pr(i \leq \delta < i+1)=\Pr(i \leq \ln(1-r)/\ln(1-\overline{b})<i+1)&=\label{eq:logratio}\\
 \Pr(\ln(1-r)\leq i\cdot\ln(1-\overline{b}) \wedge \ln(1-r)>(i+1)\cdot\ln(1-\overline{b}))&=\label{eq:loginequality}\\
 \Pr(1-(1-\overline{b})^i \leq r < (1-(1-\overline{b})^{i+1}))= 1-(1-\overline{b})^{i+1} - 1+(1-\overline{b})^i&=\label{eq:uniform-r-needed}\\
 (1-\overline{b})^i(1-(1-\overline{b})) = (1-\overline{b})^i\cdot \overline{b}\label{eq:waiting-times}
\end{align}
Note that in Eq.~(\ref{eq:logratio}) the denominator is negative, thus the direction of the inequality is reversed in the transformation.
The transformation in Eq.~(\ref{eq:uniform-r-needed}) works since $r$ is uniformly distributed.

Following from Eq.~(\ref{eq:waiting-times}), the probability is $\overline{b}$ for $i=0$, and if a point is selected as a candidate, the subsequent point is selected with a probability of $\overline{b}$.

\paragraph{Step 2}
Let $p_i$, $p_j$ and $p_l$ be points in a leaf, with $i<j<l$ and let $p_i$ be a neighbor candidate.
For now we assume that no other points in the same leaf are candidates and consider the probability that $p_l$ is selected as a candidate depending on whether the intermediate point $p_j$ is a candidate.

\textbf{Case 2.1:} If point $p_j$ is a candidate, then point $p_l$ is selected if $l-j$ points are skipped after selecting $p_j$.
Due to Step 1, this probability is $(1-\overline{b})^{l-j}\cdot \overline{b}$

\textbf{Case 2.2:} If point $p_j$ is \emph{not} a candidate, then point $p_l$ is selected if $l-i$ points are skipped after selecting $p_i$.
Given that $p_j$ is not selected, at least $j-i$ points are skipped.
The conditional probability is then:
\begin{align}
 \Pr(l-i \leq \delta < l-i+1 | \delta > j-i) &=\\
 \Pr(1-(1-\overline{b})^{l-i} < r < (1-(1-\overline{b})^{l-i+1}) | \delta > j-i)&=\\
 (1-\overline{b})^{l-i}\cdot \overline{b} / (1-\overline{b})^{j-i}=
 (1-\overline{b})^{l-j}\cdot \overline{b}
\end{align}
As both cases yield the same result, the probability $\Pr(p_l \in \candidateset)$ is independent of whether $p_j$ is a candidate.

\paragraph{Step 3}
Let $C$ be a leaf cell in which all points up to point $p_i$ are selected as candidates.
Due to Step 1, the probability that $p_{i+1} $ is also a candidate, meaning no points are skipped, is $(1-\overline{b})^0\cdot \overline{b} = \overline{b}$.
Due to Step 2, the probability of $p_{i+1}$ being a candidate is independent of whether $p_i$ is a candidate.
This can be applied iteratively until the beginning of the leaf cell, yielding a probability of $\overline{b}$ for $p_i$ being a candidate, independent of whether other points are selected.
\old{Is this sufficient, or does it need to be phrased more formally, for example with an Induction?}

A neighbor candidate $p_i$ is accepted as a neighbor with probability $f(\dist(p_i, q))/\overline{b}$ in Line~\ref{line:candidate-confirmation}.
Since $\overline{b}$ is an upper bound for the neighborhood probability, the acceptance ratio is between 0 and 1.
The probability for a point $p$ to be in the probabilistic neighborhood computed by Algorithm~\ref{algo:quadnode-probabilistic} is thus:
\begin{align}
 \Pr(p\in \neighborset(q,f)) =
 \Pr(p\in \neighborset(q,f) \wedge p\in \candidateset(q,f)) &=\\
 \Pr(p \in \neighborset(q,f) | p \in \candidateset(q,f)) \cdot \Pr(p\in \candidateset(q,f))&=\\
 f(\dist(p, q))/\overline{b} \cdot \overline{b}=
 f(\dist(p, q))
\end{align}
\end{proof}

Since Algorithm~\ref{algo:quadnode-probabilistic} examines the complete quadtree, its time complexity is at least linear.
We omit a more thorough analysis until the next section, in which we show how to accelerate the query process.

%%%%%%%%%%%%%%%%%%%%%%%%%%%%%%%%%%%%
\section{Queries in Sublinear Time by Subtree Aggregation}
\label{sec:subtree-aggr}
One reason for the linear time complexity of the baseline query is the fact that every quadtree node is visited.
To reach a sublinear time complexity, we thus aggregate subtrees into \emph{virtual leaf cells} whenever doing so reduces the number of examined cells and does not increase the number of candidates too much.

To this end, let $S$ be a subtree starting at depth $l$ of a quadtree $T$.
During the execution of Algorithm~\ref{algo:quadnode-probabilistic}, a lower bound $\underline{b}$ for the distance between $S$ and the query point $q$ is calculated,
yielding also an upper bound $\overline{b}$ for the neighbor probability of each point in $S$.
At this step, it is possible to treat $S$ as a \emph{virtual leaf cell}, sample jumping widths using $\overline{b}$ as upper bound and use these widths to select candidates within $S$.
Algorithm~\ref{algo:maybe-get-kth-element} is used in a virtual leaf cell where the candidate confirmation (Line~\ref{line:candidate-confirmation} of Algorithm~\ref{algo:quadnode-probabilistic}) happens in an original leaf cell.
Aggregating a subtree to a virtual leaf cell allows skipping leaf cells which do not contain candidates, but uses a weaker bound $\overline{b}$ and thus a potentially larger candidate set.
Thus, a fast algorithm requires an aggregation criterion which keeps both the number of candidates and the number of examined quadtree cells low.
\footnote{In the extreme case, candidates are selected directly at the root.
In this case, the distance to the query point is 0 and the probability bound $\overline{b}$ is $f(0)$, resulting in linearly many candidates.}
As stated before, we record the number of points in each subtree during quadtree construction.
This information is now used for the query algorithm:
We aggregate a subtree $S$ to a virtual leaf cell exactly if $|S|$, the number of points contained in $S$, is below $1 / f(\text{dist}(S,q))$.
This corresponds to less than one expected candidate within $S$.
The changes required in Algorithm~\ref{algo:quadnode-probabilistic} to use the subtree aggregation are minor.
Lines~\ref{line:notleafcell}, \ref{line:candidate-confirmation} and \ref{line:add-confirmed} are changed to:

\LinesNotNumbered
\begin{algorithm}[H]
\nlset{5} \textbf{if}\emph{ $c$ is inner node and $|c|\cdot\overline{b}\geq 1$}\textbf{ then}
\end{algorithm}

\begin{algorithm}[H]
 \nlset{14} neighbor = maybeGetKthElement($q$, $f$, $i$, $\overline{b}$, $c$)\;
 \nlset{15} add neighbor to $\neighborset$ if not empty set
\end{algorithm}
\LinesNumbered

The main change consists in the use of the function maybeGetKthElement (Algorithm~\ref{algo:maybe-get-kth-element}):

\begin{algorithm}[H]
\KwIn{query point $q$, function $f$, index $k$, bound $\overline{b}$, subtree $S$}
\KwOut{$k$th element of $S$ or empty set}
 \If{$k \geq |S|$}{
 \Return $\emptyset$\;
 }
\If{$S$.isLeaf()}{
acceptance = $f(\text{dist}(q,\mathrm{S.positions}[k]))/\overline{b} $\;
\If{$\mathit{1-rand()} <$  acceptance}{
\Return $\mathrm{S.elements}[k]$\;
}
\Else{
\Return $\emptyset$\;
}
}
\Else{\tcc{Recursive call}
offset := 0\;
\For{child $\in$ $S$.children}{
\If{$k - \mathrm{offset} < |\mathrm{child}|$}{\tcc{|child| is the number of points in \emph{child}}
\Return maybeGetKthElement($q$, $f$, $k$ - offset, $\overline{b}$, child)\;
}
offset += $|\mathrm{child}|$\;
}
}
 \caption{maybeGetKthElement}
 \label{algo:maybe-get-kth-element}
\end{algorithm}
Given a subtree $S$, an index $k$, $q$, $f$, and $\overline{b}$, the algorithm descends $S$ to the leaf cell containing the
$k$th element. This element $p_k$ is then accepted with probability $f(\dist(q, p_k)) / \overline{b}$.

Since the upper bound calculated at the root of the aggregated subtree is not smaller than the individual upper bounds at the original leaf cells, Proposition~\ref{lemma:independent-correct-probabilities} also holds for the virtual leaf cells. This establishes the correctness.

%%%%%%%%%%%%%%%%%
\subsection{Query Time Complexity}
\label{subsec:subtree-aggregation-complexity}
Our main analytical result of this section concerns the time complexity of the faster query algorithm.
Its proof relies on several lemmas presented afterwards.

\begin{theorem}
Let $T$ be a quadtree with $n$ points and $(q,f)$ a query pair.
A query $(q,f)$ using subtree aggregation has time complexity $O((|\neighborset(q,f)| + \sqrt{n}) \log n)$ whp.
\label{lemma:subtree-aggregation-complexity}
\end{theorem}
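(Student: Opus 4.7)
The plan is to decompose the total work into (a) descending through non-aggregated cells, (b) drawing candidates at (original or virtual) leaves, and (c) confirming each candidate via Algorithm~\ref{algo:maybe-get-kth-element}. Each cell visit performs only constant-time arithmetic (a distance lower bound, a probability upper bound, a subtree-size lookup and a geometric-distribution sample), and by Proposition~\ref{THM:QUADTREE-HEIGHT} every call to \texttt{maybeGetKthElement} descends $O(\log n)$ levels whp. Hence the running time is $O(|\cellset(q,f)| + |\candidateset(q,f)|\log n)$, and the theorem reduces to proving $|\cellset(q,f)| = O(\sqrt{n}\log n)$ and $|\candidateset(q,f)| = O(|\neighborset(q,f)| + \sqrt{n})$ whp.

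First I would bound $|\candidateset(q,f)|$. Any subtree $S$ that is aggregated into a virtual leaf satisfies $|S|\cdot\overline{b} < 1$, so its expected candidate contribution is strictly below one. Partitioning $\candidateset(q,f)$ into candidates produced inside aggregated virtual leaves and candidates produced inside original leaves reached without aggregation, the former contribute at most $|\cellset(q,f)|$ in expectation. Inside a non-aggregated leaf, the upper bound $\overline{b}=f(\underline{b})$ is within a constant factor of $f(\dist(q,p))$ for every contained point $p$, because quadtree cells have bounded aspect ratio (the motivation for using a quadtree in Section~\ref{sub:prelim-quadtree}) and thus bounded relative distance variation; consequently the expected number of candidates inside that leaf is a constant-factor overcount of the expected number of neighbors. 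Summing over all examined leaves and applying a Chernoff bound to pass from expectation to a whp statement yields $|\candidateset(q,f)| = O(|\neighborset(q,f)| + \sqrt{n})$ once the cell bound is in hand.

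The core of the argument is bounding $|\cellset(q,f)|$. Here I would exploit rotational invariance of $j$ and the bijection $J|_{[0,R]}$ to re-coordinatize each point by its pair $(\phi, J(r))$; under this change, Lemma~\ref{lemma:node-cell-probabilities} says that every quadtree cell at depth $\ell$ covers a rectangle of area $4^{-\ell}$ in the $(\phi, J(r))$-plane, so the quadtree is morally a regular $2^{\ell}\times 2^{\ell}$ grid at level $\ell$. A visited internal cell satisfies $|c|\cdot f(\underline{b})\ge 1$, which, combined with Lemma~\ref{lemma:node-cell-probabilities} and a Chernoff bound for cell populations, confines such cells to an annular region around $q$ whose width and radius are determined by the decay of $f$ and by $n$. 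A standard boundary-counting argument shows that at any fixed level at most $O(\sqrt{n/4^{\ell}}\cdot 2^{\ell}) = O(\sqrt{n})$ grid cells intersect this annulus; summing over the $O(\log n)$ levels permitted by Proposition~\ref{THM:QUADTREE-HEIGHT} yields $|\cellset(q,f)| = O(\sqrt{n}\log n)$ whp. Combining with the candidate bound above gives the claimed total $O((|\neighborset(q,f)|+\sqrt{n})\log n)$.

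The main obstacle I anticipate is carrying out the annulus-and-level counting uniformly across both geometries and the full family of distributions $j$ allowed in Section~\ref{sub:notation}. The rotational invariance separates the angular and radial axes, and the positivity of $j$ on $\mathbb{D}_R$ makes $(J|_{[0,R]})^{-1}$ well-defined, but translating a hyperbolic or Euclidean distance lower bound $\underline{b}$ for a given cell into a statement about the cell's location in $(\phi, J(r))$-coordinates requires care, as do the Lipschitz constants appearing when comparing $\dist$ to the transformed coordinate. I would handle this by working separately within each of the constant-many radial regimes the inverse $J^{-1}$ produces and then combining; the probabilistic step would use Lemma~\ref{lemma:bound-number-quadtree-cells} to control the total number of cells before applying level-wise concentration to the populated ones.
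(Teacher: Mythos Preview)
Your decomposition into cell visits and candidate confirmations is the right starting point, and the cost accounting $O(|\cellset(q,f)| + |\candidateset(q,f)|\log n)$ matches the paper. However, the substance of your bounds on $|\candidateset(q,f)|$ and $|\cellset(q,f)|$ contains a genuine gap.

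The central problem is your claim that inside a non-aggregated leaf ``the upper bound $\overline{b}=f(\underline{b})$ is within a constant factor of $f(\dist(q,p))$ for every contained point $p$, because quadtree cells have bounded aspect ratio.'' Bounded aspect ratio controls the \emph{geometry} of a cell, but says nothing about ratios of $f$-values: the statement explicitly allows $f$ to be any monotonically decreasing, possibly discontinuous function. A leaf that straddles a steep drop of $f$ can have $\overline{b}$ close to $1$ while most of its points have $f(\dist(q,p))$ arbitrarily close to $0$, so the expected number of candidates there is \emph{not} a constant-factor overcount of expected neighbors. The paper handles exactly this difficulty by partitioning points into dyadic probability \emph{bands} $2^{-(i+1)} < f(\dist(p,q)) \le 2^{-i}$: within a band the candidate-to-neighbor ratio is at most $2$ by construction (Lemma~\ref{lemma:half-prob-set}), and the ``overhang'' of points whose band differs from their cell's band is controlled by counting cells cut by the inter-band circle (Lemma~\ref{lemma:ring-root}). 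Your aspect-ratio shortcut sidesteps the mechanism that actually makes the bound work.

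A second, related issue is your cell bound. The condition $|c|\cdot f(\underline{b}) \ge 1$ does not confine visited cells to a single annulus in $(\phi, J(r))$-coordinates for general $f$; cells near $q$ with large $f(\underline{b})$ are visited down to fine levels, and when $|\neighborset(q,f)|$ is large the number of such cells can exceed $\sqrt{n}\log n$. Indeed the paper's Lemma~\ref{lemma:subtree-aggregation-cells} gives $O((|\neighborset(q,f)|+\sqrt{n})\log n)$ cells, not $O(\sqrt{n}\log n)$, and it is proved \emph{from} the candidate bound (via Lemma~\ref{lemma:bound-unaggregated-quadtree-cells-with-leaf-children}, showing each bottom-level unaggregated cell contains a candidate with constant probability), not the other way around. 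Your proposed dependency---cells first, then candidates---would need an independent argument for cells that does not currently go through.
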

\begin{proof}
Similar to the baseline algorithm, the complexity of the faster query is determined by the number of recursive calls and the total number of loop iterations across the calls.
The first corresponds to the number of examined quadtree cells, the second to the total number of candidates.
With subtree aggregation, we obtain improved bounds: Lemma~\ref{lemma:subtree-aggregation-candidates} limits the number of candidates to $O(|\neighborset(q,f)| + \sqrt{n})$ whp, while Lemma~\ref{lemma:subtree-aggregation-cells} bounds the number of examined quadtree cells to $O((|\neighborset(q,f)| + \sqrt{n})\log n)$ whp.
Together, this results in a query complexity of $O((|\neighborset(q,f)| + \sqrt{n}) \log n)$ whp.
\end{proof}

For the lemmas required in the proof of Theorem~\ref{lemma:subtree-aggregation-complexity} we need to introduce some
notation:
Let $T$ be a quadtree with $n$ points, $S$ a subtree of $T$ containing $s$ points, $q$ a query point and $f$ a function mapping distances to probabilities.
The set of neighbors ($\neighborset(q,f)$), candidates ($\candidateset(q,f)$) and examined cells ($\cellset(q, f)$) are defined as in Section~\ref{sub:notation}.

For the analysis we divide the space around the query point $q$ into infinitely many bands, based on the probabilities given by $f$.
A point $p\in\pointset$ is in band $i$ exactly if the probability of it being a neighbor of $q$ is between $2^{-(i+1)}$ and $2^{-i}$:
\[
 p \in \text{band }i  \Leftrightarrow 2^{-(i+1)} < f(\text{dist}(p,q)) \leq 2^{-i}
\]
Based on these bands, we divide the previous sets into infinitely many subsets:
\begin{itemize}
 \item $\pointset(q, f, i) := \{v \in \pointset | 2^{-(i+1)} < f(\text{dist}(v,q)) \leq 2^{-i}\}$
 \item $\neighborset(q, f, i) := \neighborset(q, f) \cap \pointset(q,f,i)$
 \item $\candidateset(q, f, i) := \candidateset(q, f) \cap \pointset(q,f,i)$
 \item $\cellset(q, f, i) := \{c \in \cellset(q, f) | 2^{-(i+1)} < f(\text{dist}(c,q)) \leq 2^{-i}\}$
\end{itemize}

Note that for fixed $n$, all but at most finitely many of these sets are empty.
We call the quadtree cells in $\cellset(q,f,i)$ to be \emph{anchored} in band $i$.
The region covered by a quadtree cell is in general not aligned with the probability bands, thus a quadtree cell anchored in band $i$ ($c \in \cellset(q,f,i)$) may contain points from higher bands (i.e. with lower probabilities).

We continue with two auxiliary results used in Lemma~\ref{lemma:subtree-aggregation-candidates}.

\newcommand{\uaqcset}{\ensuremath{\Upsilon}}
\newcommand{\subtreeset}{\ensuremath{\mathcal{\Psi}}\xspace}
\begin{lemma}
Let $T$ be a polar hyperbolic [Euclidean] quadtree with $n$ points and $s < n$ a natural number.
Let $\Lambda$ be a circle in the hyperbolic [Euclidean] plane and let \subtreeset be the disjoint set of subtrees of $T$ that contain at most $s$ points and are cut by $\Lambda$.
Then, the subtrees in \subtreeset contain at most $24\sqrt{n\cdot s}$ points with probability at least $1-0.7^{\sqrt{n}}$ for $n$ sufficiently large.
\label{lemma:ring-root}
\end{lemma}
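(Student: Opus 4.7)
The plan is to anchor the analysis at a reference depth $d^{*} := \lceil \log_{4}(n/s) \rceil$ chosen so that, by Lemma~\ref{lemma:node-cell-probabilities}, every cell at depth $d^{*}$ carries probability mass exactly $4^{-d^{*}} \leq s/n$, and hence has expected at most $s$ of the $n$ points. I would split \subtreeset into the \emph{shallow} part $\subtreeset_{\leq d^{*}}$ of subtrees rooted at depth $\leq d^{*}$ and the \emph{deep} part $\subtreeset_{> d^{*}}$, and let $\mathcal{C}^{*}$ denote the set of depth-$d^{*}$ cells whose closure meets $\Lambda$. Every deep subtree is contained in some $C \in \mathcal{C}^{*}$, so by the disjointness hypothesis the deep contribution is at most $\sum_{C \in \mathcal{C}^{*}} |C|$. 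Every shallow subtree $S$, being cut by $\Lambda$, must contain at least one of its depth-$d^{*}$ descendants in $\mathcal{C}^{*}$; disjointness of \subtreeset then yields an injection $\subtreeset_{\leq d^{*}} \hookrightarrow \mathcal{C}^{*}$, so the shallow contribution is at most $|\mathcal{C}^{*}| \cdot s$.

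Next I would bound $|\mathcal{C}^{*}|$ by a perimeter argument. At depth $d^{*}$ the polar quadtree has at most $2^{d^{*}}$ angular boundary rays and at most $2^{d^{*}}$ concentric circular boundaries (hyperbolic geodesics through the origin and hyperbolic circles around the origin in the hyperbolic case). Since in either geometry two circles meet in at most two points and a circle meets a line or geodesic in at most two points, $\Lambda$ crosses the depth-$d^{*}$ boundary at most $4 \cdot 2^{d^{*}}$ times; each truly cut cell in $\mathcal{C}^{*}$ contributes at least two crossings, so $|\mathcal{C}^{*}| = O(2^{d^{*}}) = O(\sqrt{n/s})$, with the degenerate case in which $\Lambda$ lies entirely inside a single depth-$d^{*}$ cell absorbed in a small additive constant. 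This already yields a \emph{deterministic} bound of order $\sqrt{n s}$ on the shallow contribution.

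For the deep contribution I would treat $X := \sum_{C \in \mathcal{C}^{*}} |C|$ as a sum of $n$ independent Bernoulli indicators: each of the $n$ points lies in $\bigcup \mathcal{C}^{*}$ independently, with probability equal to the total mass $|\mathcal{C}^{*}| \cdot 4^{-d^{*}} = O(\sqrt{s/n})$. Hence $\mathbb{E}[X] = O(\sqrt{n s})$, and a multiplicative Chernoff bound yields $\Pr[X > 2\,\mathbb{E}[X]] \leq e^{-\mathbb{E}[X]/3} \leq e^{-c\sqrt{n}}$ for some constant $c > 0$. Since $\ln(1/0.7) \approx 0.357$, any $c \geq 0.357$ converts this into the required $0.7^{\sqrt{n}}$ tail bound. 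Summing the deterministic shallow contribution and the high-probability deep contribution, and tracking constants carefully (in particular the constant in $|\mathcal{C}^{*}| \leq 3\cdot 2^{d^{*}}$ and the factor~$2$ in the Chernoff deviation), gives the claimed $24\sqrt{n s}$.

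The main obstacle is the bookkeeping around the injection $\subtreeset_{\leq d^{*}} \hookrightarrow \mathcal{C}^{*}$ in the geometrically degenerate case where $\Lambda$ sits entirely inside a single depth-$d^{*}$ cell and hence crosses no boundary there; disjointness of \subtreeset already forces $|\subtreeset_{\leq d^{*}}| \leq 1$ in that case, so an additive constant in the cell count carries the argument through. The other care-intensive point is the perimeter count in the hyperbolic setting: it rests on the fact that two hyperbolic circles, and a hyperbolic circle together with a geodesic, intersect in at most two points -- standard, but worth invoking explicitly rather than only by analogy with the Euclidean case.
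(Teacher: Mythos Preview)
Your approach is essentially the paper's: anchor at depth $\approx \log_4(n/s)$, bound the number of depth-$d^{*}$ cells cut by $\Lambda$ via a crossing/perimeter count, and split \subtreeset into subtrees \emph{contained in} such cells (handled probabilistically by a binomial tail bound on the total point count in those cells) versus subtrees \emph{containing} such cells (handled deterministically via the injection and the $|S|\leq s$ hypothesis). The paper phrases the split as ``contained in / corresponds to / contains'' rather than ``deep / shallow'', but the decomposition and the two bounding mechanisms are identical.

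One step to tighten: your Chernoff chain $\Pr[X>2\,\mathbb{E}[X]]\leq e^{-\mathbb{E}[X]/3}\leq e^{-c\sqrt n}$ needs a \emph{lower} bound $\mathbb{E}[X]\geq 3c\sqrt n$, but you have only established the \emph{upper} bound $\mathbb{E}[X]=O(\sqrt{ns})$; when $|\mathcal{C}^{*}|$ happens to be small, $\mathbb{E}[X]$ can be far below $\sqrt n$ and the last inequality fails. The paper sidesteps this by passing to the dominating distribution $B\bigl(n,\,8\sqrt{s/n}\bigr)$, whose mean is exactly $8\sqrt{ns}\geq 8\sqrt n$, and applying the tail bound there; you should do the same. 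A second minor point: ``each truly cut cell contributes at least two crossings'' does not buy you a factor of~$2$, since every boundary crossing is shared between two adjacent cells---the paper simply bounds the number of cut cells by the number of crossings.
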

\begin{proof}
This proof is adapted from Lemma~3 of \citet{Looz2015HRG}.
Let $k := \lfloor \log_4 n/s\rfloor$ be the minimal depth at which cells have at least $s$ points in expectation.
At most $4^k$ cells exist at depth $k$, defined by at most $2^{k}$ angular and $2^{k}$ radial divisions.
When following the circumference of the query circle $\Lambda$, each newly cut cell requires the crossing of an angular or radial division.
Each radial and angular coordinate occurs at most twice on the circle boundary, thus each division can be crossed at most twice.
With two types of divisions, $\Lambda$ crosses at most $2\cdot2\cdot 2^k = 4\cdot2^{\lfloor\log_4 n/s \rfloor}$ cells at depth $k$.
Since the value of $4\cdot2^{\lfloor\log_4 n/s \rfloor}$ is at most $4\cdot2^{\log_4 n/s}$, this yields $\leq 8\cdot \sqrt{n/s}$ cut cells.
We denote the set of cut cells with \ringset.
Since the cells in \ringset cover the circumference of the circle $\Lambda$, a subtree $S$ which is cut by $\Lambda$ is either contained within one of the cells in \ringset,
corresponds to one of the cells or contains one.
In the first two cases, all points in $S$ are within the cells of \ringset.
In the second case, at least one cell of \ringset is contained in $S$.
As the subtrees are disjoint, this cell cannot be contained in any other of the considered subtrees.
Thus, there are no more subtrees containing points not in \ringset than there are cells in \ringset, which are less than $8\cdot \sqrt{n/s}$ many.

Due to Lemma~\ref{lemma:node-cell-probabilities}, the probability that a given point is in a given cell at level $k$ is $4^{-k}$.
The number of points contained in cells of \ringset thus follows a binomial distribution $B(n,p)$.
An upper bound for the probability $p$ is given by $\frac{8\cdot \sqrt{ns}}{n}$, thus a tail bound for a slightly different distribution $B(n,\frac{8\cdot \sqrt{ns}}{n})$
also holds for $B(n,p)$.
In the proof of Lemma~7 of \citet{Looz2015HRG} a similar distribution is considered.
Setting the variable $c$ to $8\sqrt{s}$, we see that the probability of \ringset containing more than $16\cdot \sqrt{sn}$ points is smaller than $0.7^{\sqrt{n}}$.

The subtrees in \subtreeset contain at most $s$ points by definition,
thus an upper bound for the number of points in these subtrees is given by $s\cdot 8\cdot \sqrt{n/s}$ (points not in \ringset) + $16\cdot \sqrt{sn}$ (points in \ringset).
This results in at most $24\cdot \sqrt{sn}$ points contained in \subtreeset with probability at least $1-0.7^{\sqrt{n}}$.
\end{proof}

\begin{lemma}
Let $n$ be a natural number and let $A$, $B$ be sets with $A \subseteq B, |B| \leq n$ and the following property: $\Pr(b \in A) \geq 0.5$, $\forall b \in B$.
Further, let the probabilities for membership in $A$ be independent.
Then, the number of points in $B$ is in $O(|A| + \log n)$ with probability at least $1-1/n^3$.
\label{lemma:half-prob-set}
\end{lemma}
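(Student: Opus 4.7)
The plan is to turn the high-probability upper bound on $|B|$ into a straightforward Chernoff argument, since the membership indicators are independent Bernoulli variables with success probability at least $1/2$. Conditioning on the realization of $B$ (treating its elements as a fixed but arbitrary set), write $|A| = \sum_{b \in B} X_b$ where $X_b$ is the indicator of $b \in A$. Then $\mathbb{E}[|A| \mid B] \geq |B|/2$, and the $X_b$'s are mutually independent, so the multiplicative Chernoff bound applies cleanly.

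My next step is to do a case split on $|B|$. Fix a constant $C$ (say $C = 48$) and distinguish the cases $|B| \leq C \ln n$ and $|B| > C \ln n$. In the first case, $|B| = O(\log n)$ deterministically, which already gives the conclusion. In the second case, I apply the multiplicative Chernoff bound with $\delta = 1/2$ to obtain
\begin{equation*}
\Pr\!\left(|A| \leq \tfrac{|B|}{4} \,\Big|\, B\right) \;\leq\; \exp\!\left(-\tfrac{|B|}{16}\right) \;<\; \exp(-3\ln n) \;=\; n^{-3},
\end{equation*}
so with probability at least $1 - n^{-3}$ we get $|B| \leq 4|A|$, which is $O(|A|)$.

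Combining the two cases gives $|B| \leq \max\{C\ln n,\, 4|A|\} = O(|A| + \log n)$ with failure probability at most $n^{-3}$, as claimed. To remove the conditioning on the realization of $B$, I take expectations over $B$ (or, equivalently, apply the conditional bound pointwise): since the right-hand side is a bound that holds for every outcome of $B$ with the stated conditional probability, it holds unconditionally with the same probability. A small caveat, which I would mention in passing, is that if $|B|$ is itself random one could also union-bound over the at most $n$ possible values of $|B|$, at the cost of replacing $n^{-3}$ by $n^{-2}$; this is still good enough for the applications in Lemmas~\ref{lemma:subtree-aggregation-candidates} and~\ref{lemma:subtree-aggregation-cells}.

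I do not expect a real obstacle here — the only design choice is picking the constants in the Chernoff bound so that the failure probability is at most $1/n^3$; the rest is a one-line case split. The lemma is a textbook application of concentration, packaged in the form needed later to bound the number of candidates examined in a virtual leaf cell by the number actually accepted, plus a logarithmic slack.
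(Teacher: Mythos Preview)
Your proof is correct and follows essentially the same approach as the paper: a case split on whether $|B|$ exceeds a logarithmic threshold, with the large case handled by a concentration inequality showing $|A|$ is at least a constant fraction of $|B|$ with failure probability below $n^{-3}$. The only cosmetic differences are that the paper uses Hoeffding's inequality (with threshold $10\log n$ and $|B|\leq 10|A|$) where you use the multiplicative Chernoff bound (with threshold $48\ln n$ and $|B|\leq 4|A|$), and the paper treats $B$ as fixed rather than discussing conditioning.
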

\begin{proof}
Let $X = |A|$ be a random variable denoting the size of $A$.
Since the individual probabilities for membership in $A$ might be different, $X$ does not necessarily follow a binomial distribution.
We define an auxiliary distribution $Y := B(|B|, 0.5)$.
Since all membership probabilities for $A$ are at least 0.5, lower tail bounds derived for $Y$ also hold for $X$.

The probability that $Y$ is less than $0.1|B|$ is then~\cite{Hoe63}:
\begin{align}
\Pr(Y < 0.1|B|) &\leq \exp\left(-2\frac{(0.5|B|-0.1|B|)^2}{|B|} \right) = \exp\left(-0.32|B|\right)\\ 
\end{align}

If $|B| \leq 10\log n$, then $|B|$ is trivially in $O(\log n)$, otherwise the probability $\Pr(|A| < 0.1|B|)$ is $\Pr(|A| < 0.1|B|) \leq \Pr(Y < 0.1|B|) \leq \exp\left(-3.2\log n\right) = n^{-3.2} < 1/n^3$.
Thus $|B| \leq 10|A| \in O(|A|)$ with probability at least $1-1/n^3$.
\end{proof}

The following Lemmas~\ref{lemma:subtree-aggregation-candidates} and~\ref{lemma:subtree-aggregation-cells} bound the number of examined candidates and examined quadtree cells, concluding this proof of Theorem~\ref{lemma:subtree-aggregation-complexity}.

\begin{lemma}
Let $T$ be a quadtree with $n$ points and $(q,f)$ a query pair.
The number of candidates examined by a query using subtree aggregation is in $O(|\neighborset(q,f)| + \sqrt{n})$ whp.
\label{lemma:subtree-aggregation-candidates}
\end{lemma}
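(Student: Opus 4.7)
My plan is to split the candidates according to where they arise in the band decomposition. Call a stop cell $c$ (either an aggregated virtual leaf or a reached actual quadtree leaf) anchored in band $j$ of type~$(\beta)$ if it lies entirely in the annulus between the circles $\Lambda_j = \{p : f(\dist(p,q)) = 2^{-j}\}$ and $\Lambda_{j+1}$, and of type~$(\alpha)$ otherwise, meaning $c$ is cut by $\Lambda_{j+1}$. Every candidate is attributed to the unique stop cell that contains it, so the $(\alpha)$ and $(\beta)$ contributions to $|\candidateset(q,f)|$ can be bounded separately.

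For a type~$(\beta)$ cell $c$ anchored in band $j$ every point of $c$ itself lies in band $j$, so $\overline{b}_c \leq 2^{-j}$ while $f(\dist(p,q)) > 2^{-(j+1)}$ for every $p \in c$. The confirmation probability $f(\dist(p,q))/\overline{b}_c$ is therefore at least $1/2$ for every such point, and this is exactly the acceptance probability used in Line~\ref{line:candidate-confirmation}. Collecting all $(\beta)$ candidates across bands into a single random set $B$ and letting $A \subseteq B$ be the subset accepted as neighbors, the hypothesis of Lemma~\ref{lemma:half-prob-set} is satisfied and yields $|B| = O(|A| + \log n) = O(|\neighborset(q,f)| + \log n)$ with probability at least $1 - 1/n^3$.

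For the type~$(\alpha)$ contribution I would apply Lemma~\ref{lemma:ring-root} one band at a time. Since such a $c$ is aggregated and its minimum distance to $q$ lies strictly inside $\Lambda_{j+1}$, the aggregation inequality $|c|\overline{b}_c < 1$ combined with $\overline{b}_c > 2^{-(j+1)}$ forces $|c| < 2^{j+1}$. Lemma~\ref{lemma:ring-root} applied to the circle $\Lambda_{j+1}$ with $s = 2^{j+1}$ then bounds the total number of points in these disjoint subtrees by $24\sqrt{n\cdot 2^{j+1}}$ whp. Consequently the expected number of candidates they produce is at most $2^{-j}\cdot 24\sqrt{n\cdot 2^{j+1}} = O(\sqrt{n}\cdot 2^{-j/2})$, and the geometric sum over $j$ is $O(\sqrt{n})$. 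A Chernoff bound on the underlying sum of independent Bernoulli candidate indicators, conditioned on the positions and combined via a union bound with the $O(\log n)$ non-trivial bands (each invocation of Lemma~\ref{lemma:ring-root} failing only with probability $0.7^{\sqrt{n}}$), promotes the expectation bound to $O(\sqrt{n})$ whp. Actual leaves have $O(1)$ capacity and are absorbed into the same argument. Summing the two contributions gives the claimed $O(|\neighborset(q,f)| + \sqrt{n})$ whp.

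The main technical obstacle I anticipate is making sure each cell is charged exactly once. A type~$(\alpha)$ cell anchored in band $j$ also contains points in bands $j+1, j+2, \ldots$, and a careless accounting that counts it once per traversed band would destroy the convergence of $\sum_j 2^{-j/2}$. Assigning every cell to its anchor band $j$ alone keeps the geometric series bounded. A secondary nuisance is that the Chernoff step for the $(\alpha)$ part is a statement conditional on point positions, so it should be applied inside the whp event produced by Lemma~\ref{lemma:ring-root}; the final bound is then obtained by a single outer union bound.
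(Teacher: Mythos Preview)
Your proposal is correct and follows essentially the same route as the paper: band decomposition, Lemma~\ref{lemma:half-prob-set} for the ``aligned'' part where confirmation probability is at least $1/2$, Lemma~\ref{lemma:ring-root} per band for the part cut by the next circle $\Lambda_{j+1}$, a geometric sum over bands, and a Chernoff/union-bound finish. The only structural difference is that you partition by \emph{cells} (your $(\beta)$/$(\alpha)$ types) whereas the paper partitions by \emph{points} (those in the anchor band versus the ``overhang''); this makes your $(\beta)$ set slightly smaller and your $(\alpha)$ set slightly larger than the paper's corresponding sets, but both decompositions feed the same two lemmas and yield the same bounds.

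One small inaccuracy: your claim of ``$O(\log n)$ non-trivial bands'' is not justified in general---a single point can sit in an arbitrarily high band depending on $f$. The paper simply uses that there are at most $n$ nonempty bands and takes a union bound over $2n+1$ events, each failing with probability $0.7^{\sqrt{n}}$; since $(2n+1)\cdot 0.7^{\sqrt{n}} = o(1/n)$, the whp conclusion still holds. Replace your $O(\log n)$ with $O(n)$ and the argument goes through unchanged.
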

\begin{proof}
For the analysis we consider each probability band $i$ separately.
As defined above, band $i$ contains points with a neighbor probability of $2^{-(i+1)}$ to $2^{-i}$.
Among the cells anchored in band $i$, some are original leaf cells and others are virtual leaf cells created by subtree aggregation.
The virtual leaf cells contain less than one expected candidate and thus less than $2^{i+1}$ points. The capacity of the original leaf cells is constant.
All the points in cells anchored in band $i$ have a probability between $2^{-(i+1)}$ and $2^{-i}$ to be a candidate.
Among the points in virtual or original leaf cells, some are in the same band their cell is anchored in, others are in higher cells.

We divide the set of points within cells anchored in band $i$ into four subsets:
\begin{enumerate}
 \item points in band $i$ and in original leaf cells
 \item points in band $i$ and in virtual leaf cells
 \item points not in band $i$ and in original leaf cells
 \item points not in band $i$ and in virtual leaf cells
\end{enumerate}

The points in the first two sets are unproblematic.
Since the probability that a point in these sets is a neighbor is at least $2^{-(i+1)}$, the probability for a given candidate to be a neighbor is at least $\frac{1}{2}$.
Due to Lemma~\ref{lemma:half-prob-set}, the number of candidates in these sets is in $O(|\neighborset(q,f)| + \log n)$ whp, which is in $O(|\neighborset(q,f)| + \sqrt{n})$ whp.

Points in the third set are in cells cut by the boundary between band $i$ and band $i+1$.
Since the probabilities are determined by the distance, this boundary is a circle and we can use Lemma~\ref{lemma:ring-root} to bound the number of points to $24\sqrt{n\cdot \mathrm{capacity}}$ with probability at least $1-0.7^{\sqrt{n}}$ for $n$ sufficiently large.
The mentioned capacity is the capacity of the original leaf cells.

Likewise, points in the fourth set are in virtual leaf cells cut by the boundary between bands $i$ and $i+1$.
A virtual leaf cell, which is an aggregated subtree, contains at most $2^{i+1}$ points, otherwise it would not have been aggregated.
Again, using Lemma~\ref{lemma:ring-root}, we can bound the number of points in these sets to $24\sqrt{n\cdot 2^{i+1}}$ points with probability at least $1-0.7^{\sqrt{n}}$.

We denote the union of the third and fourth sets with $\overhangset(q,f,i)$.
From the individual bounds derived in the previous paragraphs, we obtain an upper bound for the number of points in $\overhangset(q,f,i)$ of $24(\sqrt{n\cdot \mathrm{capacity}} + \sqrt{n\cdot 2^{i+1}})$ with probability at least $(1-0.7^{\sqrt{n}})^2$.
Simplifying the bound, we get that $|\overhangset(q,f,i)| \leq 24 \sqrt{n}\cdot(2^{(i+1)/2} + \sqrt{\mathrm{capacity}})$ with probability at least $1-2\cdot0.7^{\sqrt{n}}$.

Each of the points in $\overhangset(q,f,i)$ is a candidate with a probability between $2^{-i}$ and $2^{-(i+1)}$.
The candidates are sampled independently (see Step 2 of Lemma~\ref{lemma:independent-correct-probabilities}). 
While different points may have different probabilities of being a candidate and the total number of candidates does not follow a binomial distribution,
we can bound the probabilities from above with $2^{-i}$.

We proceed towards a Chernoff bound for the total number of candidates across all overhangs.
Let $X_i$ denote the random variable representing the candidates within $|\overhangset(q,f,i)|$ and let $X = \sum_{i=0}^{\infty} X_i$ denote the total number of candidates in overhangs.

The expected value $\mathbb{E}(X)$ follows from the linearity of expectations:
\begin{align}
\mathbb{E}(X) &= \sum_{i=0}^{\infty} \mathbb{E}(X_i)\\
&= \sum_{i=0}^{\infty} 24 \sqrt{n}\cdot(2^{(i+1)/2} + \sqrt{\mathrm{capacity}})\cdot2^{-i})\\
&= 24 \sqrt{n} \sum_{i=0}^{\infty} \sqrt{2}\cdot 2^{-i/2} + 2^{-i}\sqrt{\mathrm{capacity}}))\\
&= 24 \sqrt{n} ((2\sqrt{2}+2) + 2\sqrt{\mathrm{capacity}})
\end{align}

(Cells anchored in the band $\infty$, which has an upper bound $\overline{b}$ of zero for the neighborhood probability, do not have any candidates and can be omitted here.)

Since the candidates are sampled independently with a probability of at most $2^{-i}$, we can treat $X$ as a sum of independent Bernoulli random variables without loosing generality.
This allows us to use a multiplicative Chernoff bound~\cite{mitzenmacher2005probability} and we can now give an upper bound for the probability that the overhangs contain more than twice as many candidates as expected:
\begin{align}
\Pr(X > 2\mathbb{E}(X)) &\leq \left( \frac{e}{2^2} \right)^{\mathbb{E}(X)}\\
&= \left( \frac{e}{2^2} \right)^{24 \sqrt{n} ((2\sqrt{2}+2) + 2\sqrt{\mathrm{capacity}})}\\
&\leq \left( \frac{e}{2^2} \right)^{\sqrt{n}}\\
&\leq 0.7^{\sqrt{n}}
\end{align}

Including this last one, we have a chain of $2n+1$ tail bounds, each with a probability of at least $(1-0.7^{\sqrt{n}})$.
The event that any of these tail bounds is violated is a union over each event that a specific tail bound is violated.
With a union bound~\cite[Lemma 1.2]{mitzenmacher2005probability}, the probability that any of the individual tail bounds is violated is at most $(2n+1)0.7^{\sqrt{n}}$.
Since $\frac{1}{(2n+1)0.7^{\sqrt{n}}}$ grows faster than $n$ for $n$ sufficiently large,
we conclude that the total number of candidates is thus bounded by $O(|\neighborset(q,f)|) + 48\sqrt{n}((2\sqrt{2}+2) + 2\sqrt{\mathrm{capacity}})$ with probability at least $(1-1/n)$ for $n$ sufficiently large.
The leaf capacity is constant, thus the number of candidates evaluated during execution of a query $(q,f)$ is in $O(|\neighborset(q,f,i)| + \sqrt{n})$ whp.
\end{proof}

We proceed with a lemma necessary for bounding the number of examined quadtree cells in a query.

\begin{lemma}
Let $T$ be a quadtree with $n$ points and $(q,f)$ a query pair.
The number of quadtree cells examined by a query using subtree aggregation is in $O((|\neighborset(q,f)| + \sqrt{n}) \log n)$.
\label{lemma:subtree-aggregation-cells}
\end{lemma}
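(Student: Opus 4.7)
My plan is to bound $|\cellset(q,f)|$ by the number of stopping points of the descent times $O(\log n)$, then bound the stopping points by adapting the banded analysis of Lemma~\ref{lemma:subtree-aggregation-candidates}. First I would observe that the examined cells form a subtree $T_q \subseteq T$: by the modified Line~5, whenever the algorithm enters an internal cell $c$ with $|c|\cdot\overline{b} \geq 1$ it descends into all four children, and otherwise it treats $c$ as a (virtual or original) leaf and stops. Let $L$ denote the number of leaves of $T_q$, i.e., the stopping points. Every examined cell lies on a root-to-leaf path of $T_q$ of length at most $\mathrm{height}(T) \in O(\log n)$ whp by Proposition~\ref{THM:QUADTREE-HEIGHT}, so $|\cellset(q,f)| \leq L \cdot O(\log n)$; incidentally each internal node of $T_q$ has exactly four children, which would give the sharper $|\cellset(q,f)| = O(L)$, but only the weaker inequality is needed to match the stated bound.

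Second, I would bound $L$ band-by-band, mirroring Lemma~\ref{lemma:subtree-aggregation-candidates}. Write $L_i$ for the number of stopping points in $\cellset(q,f,i)$. By the aggregation criterion together with the constant leaf capacity $\kappa$, every stopping cell $c \in \cellset(q,f,i)$ satisfies $|c| \leq s_i := \max(\kappa, 2^{i+1})$. Stopping cells anchored in band $i$ either lie fully inside the annulus $\{p : f(\mathrm{dist}(p,q)) \in (2^{-(i+1)}, 2^{-i}]\}$ or are cut by one of its two circular boundaries. The boundary-cut cells are controlled by applying Lemma~\ref{lemma:ring-root} to each boundary with $s = s_i$, which yields $O(\sqrt{n/s_i})$ such cells whp via the $8\sqrt{n/s_i}$ cut-cell count appearing inside its proof. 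For cells fully contained in the annulus, a volumetric count at depth $d_i := \lceil \log_4(n/s_i) \rceil$ combined with Lemma~\ref{lemma:node-cell-probabilities} and a standard Chernoff bound gives $O(|\pointset(q,f,i)|/s_i)$ fully-contained stopping cells whp.

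Third, I would convert $|\pointset(q,f,i)|/s_i$ into a bound involving $|\neighborset(q,f,i)|$. Since every point in band $i$ has neighborhood probability at least $2^{-(i+1)}$, a lower-tail Chernoff bound on $|\neighborset(q,f,i)|$ (in the spirit of Lemma~\ref{lemma:half-prob-set}, but with the threshold $1/2$ replaced by $2^{-(i+1)}$) yields $|\pointset(q,f,i)| = O(2^{i+1}(|\neighborset(q,f,i)| + \log n))$ whp, hence $|\pointset(q,f,i)|/s_i = O(|\neighborset(q,f,i)| + \log n)$. Since $s_i > n$ forces band $i$ to be effectively empty, at most $O(\log n)$ bands contribute, so summing $\sqrt{n/s_i}$ over $i$ gives the geometric tail $O(\sqrt{n})$ and summing the additive logs gives $O(\log^2 n) \subseteq O(\sqrt{n})$. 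A union bound over the bands and the ring-root events (each with failure probability $O(0.7^{\sqrt{n}})$, well below $1/n$) yields $L = \sum_i L_i = O(|\neighborset(q,f)| + \sqrt{n})$ whp, and combining with Step~1 gives $|\cellset(q,f)| = O((|\neighborset(q,f)| + \sqrt{n})\log n)$ whp.

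The main obstacle is the volumetric count in Step~2 for stopping cells fully inside the annulus: a stopping cell can be empty, making a naive ``points divided by cell capacity'' lower bound vacuous. I would resolve this by exploiting the four-ary branching of $T_q$ from Step~1 to charge each empty stopping point to a non-empty sibling subtree, which inflates $L$ only by a constant factor and therefore does not affect the asymptotic bound.
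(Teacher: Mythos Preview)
Your Step~1 matches the paper's decomposition: both reduce the cell count to $O(L\log n)$ where $L$ is the number of stopping points (the paper calls these $C1$ and works with their parents $C2$, but it is the same reduction). The divergence is in how $L$ is bounded. The paper does \emph{not} attempt a direct banded count of stopping cells; instead it introduces an auxiliary lemma (Lemma~\ref{lemma:bound-unaggregated-quadtree-cells-with-leaf-children}) showing that every unaggregated cell whose children are all (virtual) leaves contains a candidate with probability at least a fixed constant. A tail bound then gives $|\uaqcset|\in O(|\candidateset(q,f)|)$ whp, and Lemma~\ref{lemma:subtree-aggregation-candidates} finishes. The crucial point is that an unaggregated cell anchored in band $i$ is \emph{guaranteed} to hold at least $2^{i}$ points, which is exactly the lower bound your volumetric argument needs but does not have.

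Your Steps~2--3 have two concrete gaps that this guarantee would fix. First, the ``volumetric count'' claim that there are $O(|\pointset(q,f,i)|/s_i)$ fully-contained stopping cells does not follow: a stopping cell can be an original leaf or a sparse virtual leaf with as little as one point (or zero), so disjointness plus the point budget $|\pointset(q,f,i)|$ only yields $O(|\pointset(q,f,i)|)$ cells, not $O(|\pointset(q,f,i)|/s_i)$. Your charging of empty cells to non-empty siblings raises the floor from $0$ to $1$, but you need a floor of order $s_i$. Second, the boundary count you extract from inside Lemma~\ref{lemma:ring-root}'s proof, namely $8\sqrt{n/s_i}$, is a bound on cells cut by $\Lambda$ \emph{at the single depth} $k=\lfloor\log_4(n/s_i)\rfloor$; it does not bound the number of disjoint stopping cells cut by $\Lambda$, which may sit at arbitrary depths $>k$ and can be numerous within a single depth-$k$ cut cell. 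Lemma~\ref{lemma:ring-root} itself only bounds \emph{points}, not cell count.

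Both gaps are repaired once you shift attention from stopping cells to their parents in $\uaqcset$: those are disjoint, each carries $\geq 2^{i}$ points when anchored in band $i$, and there are at most four stopping cells per parent. From there either your banded volumetric/boundary split or the paper's candidate-probability argument goes through; the paper's route is shorter because it reuses Lemma~\ref{lemma:subtree-aggregation-candidates} wholesale rather than re-running the band analysis.
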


To prove Lemma~\ref{lemma:subtree-aggregation-cells}, we first need to introduce another auxiliary lemma:
\begin{lemma}
Let $\mathbb{D}_R$ be a hyperbolic or Euclidean disk of radius $R$ and let $T$ be a polar quadtree on $\mathbb{D}_R$ 
containing $n$ points distributed according to Section~\ref{sub:notation}.
Let \uaqcset(q,f) be the set of unaggregated quadtree cells that have only (virtual) leaf cells as children.
 With a query using subtree aggregation, $|\uaqcset(q,f)|$ is in $O(|\neighborset(q,f)| + \sqrt{n})$ whp.
 \label{lemma:bound-unaggregated-quadtree-cells-with-leaf-children}
\end{lemma}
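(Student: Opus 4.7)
My plan is to partition $\Upsilon(q,f) = \bigcup_{i \geq 0} \Upsilon(q,f,i)$ according to the band decomposition of Section~\ref{subsec:subtree-aggregation-complexity} and bound each part separately. Two structural observations underpin the argument. First, every $c \in \Upsilon(q,f,i)$ is unaggregated, so $|c|\cdot \overline{b}_c \geq 1$, and since $\overline{b}_c \leq 2^{-i}$ this forces $|c| \geq 2^i$. Second, the cells in $\Upsilon(q,f)$ form an antichain in the quadtree, because a strict quadtree descendant of a $\Upsilon$-cell would have to lie inside one of its (virtual) leaf children and so is neither visited by the query nor can it have (virtual) leaf children of its own; in particular, the point sets of distinct $\Upsilon$-cells are disjoint.

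For each band $i$ I further split $\Upsilon(q,f,i)$ into the cells $\Upsilon^\partial(q,f,i)$ that cross the circle $\Gamma_{i+1} := \{x : f(\dist(q,x)) = 2^{-(i+1)}\}$ and the cells $\Upsilon^\circ(q,f,i)$ entirely contained in band $i$. No cell of $\Upsilon(q,f,i)$ can reach inside $\Gamma_i$, since by definition its closest point to $q$ lies at distance at least $f^{-1}(2^{-i})$. For $\Upsilon^\partial(q,f,i)$ I adapt the ring argument behind Lemma~\ref{lemma:ring-root}: disjoint subtrees of size at least $2^i$ that cross $\Gamma_{i+1}$ are rooted at quadtree depth at most $\log_4(n/2^i) + O(\log\log n)$ whp (via Lemma~\ref{lemma:node-cell-probabilities} together with a Chernoff bound and a union bound over the $O(n)$ quadtree cells), and $\Gamma_{i+1}$ crosses only $O(\sqrt{n/2^i})$ cells up to that depth. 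Summing the geometric series $\sum_{i\geq 0}\sqrt{n/2^i} = O(\sqrt{n})$ bounds $\sum_i |\Upsilon^\partial(q,f,i)| = O(\sqrt{n})$ whp.

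For $\Upsilon^\circ(q,f,i)$ every point $p$ of every cell $c$ is in band $i$, so $\Pr(p \in N(q,f)) \geq 2^{-(i+1)}$; combined with $|c|\geq 2^i$ this gives $\Pr(c \text{ contains no neighbor}) \leq (1-2^{-(i+1)})^{2^i} \leq e^{-1/2}$. Because the inclusion events for points in different antichain cells are independent (Proposition~\ref{lemma:independent-correct-probabilities}), a variant of Lemma~\ref{lemma:half-prob-set} with threshold $1 - e^{-1/2}$ in place of $1/2$ (only constants in the Chernoff bound change) yields $|\Upsilon^\circ(q,f,i)| = O(|N(q,f,i)| + \log n)$ with probability at least $1 - n^{-3}$, by charging each cell that contains a neighbor to one of its in-band neighbors. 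Only $O(\log n)$ bands can host a cell of size $\geq 2^i$, since $i \leq \log_2 n$ is forced; a union bound over bands therefore gives $\sum_i |\Upsilon^\circ(q,f,i)| = O(|N(q,f)| + \log^2 n) = O(|N(q,f)| + \sqrt{n})$ whp. Combining both parts yields the claim.

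The main obstacle I expect is the ring bound for $\Upsilon^\partial$: Lemma~\ref{lemma:ring-root} as stated counts cut subtrees of size \emph{at most} $s$, so it cannot be applied as a black box here, where I need disjoint subtrees of size at least $2^i$. The underlying geometric idea — a large subtree is shallow whp and so charges to one of the $O(\sqrt{n/s})$ cells crossed at the critical depth — mirrors the proof of Lemma~\ref{lemma:ring-root}, but concentration of cell sizes around $n\cdot 4^{-k}$ has to be re-established with a careful union bound over the relevant quadtree cells before the ring counting goes through.
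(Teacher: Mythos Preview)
Your approach is correct in outline but takes a genuinely different route from the paper. The paper does not split $\Upsilon(q,f)$ into boundary and interior cells, nor does it charge cells directly to neighbours. Instead it observes that every $c \in \Upsilon(q,f,i)$ has at least one (virtual) leaf child $l$ anchored in the \emph{same} band $i$ (since $\dist(c,q) = \min_j \dist(\text{child}_j, q)$, the minimising child inherits the band). It then argues that $c$ produces at least one \emph{candidate} with probability bounded away from zero: $|c| \geq 2^i$ points, each lands in $l$ with probability $1/4$ (Lemma~\ref{lemma:node-cell-probabilities}) and becomes a candidate there with probability $\overline{b}_l \geq 2^{-(i+1)}$, so $\Pr(\text{no candidate in } l) \leq (1-2^{-(i+3)})^{2^i} \leq e^{-1/8}$. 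A Chernoff bound then gives $|\Upsilon(q,f)| = O(|\candidateset(q,f)|)$ whp, and Lemma~\ref{lemma:subtree-aggregation-candidates} finishes. This is considerably shorter and avoids any ring argument for this lemma; your route has the merit of not depending on Lemma~\ref{lemma:subtree-aggregation-candidates}, but at the cost of redoing part of its geometric work.

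On the obstacle you flag: the fix you propose---that a subtree with $\geq 2^i$ points must be rooted at depth $\leq \log_4(n/2^i) + O(\log\log n)$ whp---does not survive the union bound for small $i$. At depth $k \approx \log_4 n$ the cells have expected size $\Theta(1)$, the upper-tail probability for a single cell is only polylogarithmically small, and there are $\Theta(n)$ cells to union over. The argument that does work is the point-counting step already inside the proof of Lemma~\ref{lemma:ring-root}: at the critical depth $k_i = \lfloor \log_4(n/2^i)\rfloor$ the set $\ringset$ of cells crossed by $\Gamma_{i+1}$ has size $\leq 8\sqrt{n/2^i}$ and contains at most $16\sqrt{n\cdot 2^i}$ points whp. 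A disjoint subtree in $\Upsilon^\partial(q,f,i)$ crossed by $\Gamma_{i+1}$ either contains a cell of $\ringset$ (at most $8\sqrt{n/2^i}$ such subtrees, by the antichain property) or is contained in one, and in the latter case the subtrees together hold $\leq 16\sqrt{n\cdot 2^i}$ points, hence number at most $16\sqrt{n/2^i}$ since each has $\geq 2^i$ points. Summing $O(\sqrt{n/2^i})$ over $i$ then yields your $O(\sqrt{n})$.
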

\begin{proof}
 Let $c\in \uaqcset(q,f,i)$ be such an unaggregated quadtree cell anchored in band $i$ that has only original or virtual leaf cells as children.
 It contains at least $2^i$ points and has four children, of which at least one is also anchored in band $i$.
 We denote this (virtual) leaf anchored in band $i$ with $l$.
 Since each child of $c$ contains the same probability mass (Lemma~\ref{lemma:node-cell-probabilities}), each point of $c$ is in $l$ with probability $1/4$:
\begin{equation}
 \Pr(p \in l | p \in c) = \frac{1}{4}.
\end{equation}

A point in $l$ is a candidate (in $l$) with probability $f(\dist(q,l))$, which is between $2^{-(i+1)}$ and $2^{-i}$ since $l$ is anchored in band $i$.
The probability that a given point $p\in c$ is a candidate in $l$ is then 
\begin{equation}
 \Pr(p \in l \wedge p \in \candidateset(q,f,i) | p \in c) = \frac{1}{4}\cdot f(\dist(q,l)) \geq 2^{-(i+3)}
\end{equation}

Since the point positions and memberships in $\candidateset(q,f,i)$ are independent, we can bound the number of candidates in $l$ with a binomial distribution $B(|c|, 2^{-(i+3)})$.
The probability that $l$ contains no candidates is:
\begin{align}
 f\left(0, |c|, \frac{1}{8} \cdot 2^{-i}\right) &= \left(1-\frac{1}{8} \cdot 2^{-i}\right)^{|c|} \leq \left(1-\frac{1}{8} \cdot \frac{1}{2^i}\right)^{2^i} 
\end{align}

Considered as a function of $i$, this probability is monotonically ascending.
In the limit of $2^i \rightarrow \infty$, it trends to $\exp(-1/8) \approx 0.88$, a value it never exceeds.
The probability that the cell $c$ contains at least one candidate is then above $1-\frac{1}{\sqrt[8]{e}} > 0.1$.

For each cell in \uaqcset, the probability that it contains at least one candidate is $> 0.1$.
Let $X$ be the random variable denoting the number of cells in \uaqcset that contain at least one candidate.
We define an auxiliary binomial distribution $B(|\uaqcset|, 0.1)$ and use a tail bound to estimate the number of cells in \uaqcset containing candidates.
Let $Y\propto B(|\uaqcset|, 0.1)$ be a random variable distributed according to this auxiliary distribution.

We use a tail bound from \citet{ArratiaGordon1989} to limit the probability that $Y < 0.05|\uaqcset|$ to at most $\exp(-|\uaqcset|/80)$.
Since $0.1$ was a lower bound for the probability that a cell contains a candidate, this tail bound also holds for $X$.
The probability that the set of \uaqcset{} contains at least $0.05|\uaqcset|$ many candidates is then at least $(1-\exp(-|\uaqcset|/80))$.

If $|\uaqcset| \in o(\sqrt{n})$, then $|\uaqcset|$ is trivially in $O(\sqrt{n})$, if $|\uaqcset| \in \omega(\sqrt{n})$,
then the probability $(1-\exp(-|\uaqcset|/80))$ is smaller than $(1-\exp(-\sqrt{n}/80))$, which is  $< 1/n$ for sufficiently large $n$.
Thus the number of examined quadtree cells during a query is then linear in the number of candidates.
Due to Lemma~\ref{lemma:subtree-aggregation-candidates}, this is in $O(|\neighborset(q,f)| + \sqrt{n})$.
\end{proof}

The proof of Lemma~\ref{lemma:subtree-aggregation-cells} then follows easily:
\begin{proof}
We split the set of examined quadtree cells into three categories:
\begin{itemize}
 \item leaf cells and root nodes of aggregated subtrees $(C1)$
 \item parents of cells in the first category $(C2)$
 \item all other $(C3)$
\end{itemize}
The third category $(C3)$ then exclusively consists of inner nodes in the quadtree. When following a chain of nodes in category $C3$ from the root downwards, it ends with a node in category $C2$.
The size $|C3|$ is thus at most $O(|C2| \log n)$ whp, since the number of elements in a chain cannot exceed the height of the quadtree, which is $O(\log n)$ by Proposition~\ref{THM:QUADTREE-HEIGHT}.

With a branching factor of 4, $|C1| = 4|C2|$ holds.

The number of cells in category $C2$ can be bounded using Lemma~\ref{lemma:bound-unaggregated-quadtree-cells-with-leaf-children} to $O(|\neighborset(q,f)| + \sqrt{n})$ with high probability.
The total number of examined cells is thus in $O((|\neighborset(q,f)| + \sqrt{n}) \log n)$.
\end{proof}

\section{Experimental Evaluation}
To evaluate the empirical performance of our probabilistic neighborhood query algorithm, we use it to process dynamic updates to random hyperbolic graphs.
We compare the running times of our quadtree structure with those of the cell data structure of \citet{bringmann2015geometric}, which gives the best performance in static generation.

\paragraph{Implementations}
Our implementation of the quadtree data structure uses the NetworKit toolkit~\cite{networkit2016journal}. 
For the cell data structure of Bringmann et al., we use the implementation of Anton Krohmer\footnote{\url{https://bitbucket.org/HaiZhung/hyperbolic-embedder}, changes based on commit f83b46111c69819b7447fbd29fe8ed9bdb1fba3f}, developed for a hyperbolic embedding publication~\cite{blsius_et_al:LIPIcs:2016:6367} which won the Track B best paper award of ESA 2016.
We implemented node addition and removal as well as a dynamic generator on top of this cell data structure\footnote{A tree-based data structure available in the same implementation already offers node addition and removal, but no constant-time random access to individual nodes, as it uses STL sets in leaf nodes.
While it is in principle possible to replace the sets with our tree data structure allowing logarithmic random access, it would defeat the purpose of this comparison:
Replacing another data structure, as a preparation to a comparison, with the data structure it is being compared to, will not offer much insight.}.

\paragraph{Experimental Setup}
Both implementations are written in C++11 and compiled with GCC 4.8.2. Our implementation is available as part of NetworKit 4.1.
Running time measurements were made on a single Intel Xeon E5-2680 core clocked at 2.70 GHz.

In each iteration of the benchmark, one point is deleted from the data structure, moved to a random location consistent with the probability distribution and reinserted into the data structure.
The graph is then updated with the new position. For each graph, we execute 10000 point movements and recreate the graph after each.

The temperature parameter $T$ was set to 0.1, the dispersion parameter $\alpha$ to 0.75 and the radius $R$ to $2\cdot \log n -1$, leading to an average degree of $\approx 9.3$.

\paragraph{Results}
\definecolor{markedcolor}{RGB}{31,120,180}
\definecolor{plottinggreen}{RGB}{178,223,138}
\definecolor{thirdhue}{RGB}{228,26,28}

\begin{figure}
\centering[tb]
\begin{tikzpicture}
 \begin{axis}[xmode=log, ymode=log, legend pos = north west,ylabel= ms / iteration, xlabel=n]
   \addplot[markedcolor,mark=*, only marks] table[x = n, y expr = \thisrow{dyn}  / \thisrow{iter}] {plots/pnq-dynamic-log-average};\addlegendentry{PNQs};
   \addplot[markedcolor] expression[domain=8192:536870912] {(0.931*sqrt(x) + 840) / 10000};\addlegendentry{$a\cdot\sqrt{n}+b$};
   \addplot[plottinggreen,mark=diamond*, only marks] table[x = n, y expr = \thisrow{dyn} / \thisrow{iter}] {plots/kromer-dynamic-log};\addlegendentry{\citet{blsius_et_al:LIPIcs:2016:6367}}
   \addplot[plottinggreen] expression[domain=8192:536870912] {(0.00043*x + 0.02*sqrt(x) + 76) / 1000};\addlegendentry{$c\cdot n + d\cdot\sqrt{n}+e$};
 \end{axis}
\end{tikzpicture}
\caption{Running time of dynamic node movements, values are averaged over 10000 movements.
The quadtree operations are up to two orders of magnitude faster and scale better with increasing graph size.
Trend lines are fitted with $a=0.931$, $b=840$, $c=0.00043$, $d=0.02$ and $e=76$.}
\label{plot:dynamic-running-times}
\end{figure}
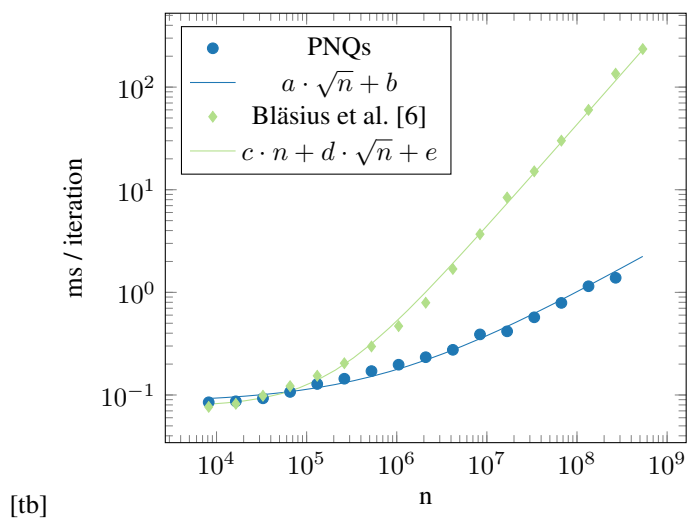

Figure~\ref{plot:dynamic-running-times} shows the experimental time measurements with sparse graphs of varying sizes.
Our method is faster for graphs of at least $10^5$ nodes; the improvement reaches two orders of magnitude
for graphs with $2\cdot 10^8$ nodes; a query on hundreds of milllions of vertices returning about 10 neighbors runs in the order of milliseconds.

\begin{figure}[tb]
 \centering
\begin{tikzpicture}
 \begin{axis}[xmode=log, ymode=log, axis equal, legend pos = north west,ylabel= node movements, xlabel=n]
   \addplot[smooth,thirdhue,mark=*] table[x = n, y expr = (\thisrow{construct} + \thisrow{trim}) / ((\thisrow{nativedyn} - \thisrow{dyn})  / \thisrow{iter})] {plots/pnq-dynamic-plus-qt-construction-2};
 \end{axis}
\end{tikzpicture}
\caption{Number of node movements needed to amortize overhead of quadtree construction. Values are averaged over 10000 iterations.}
\label{plot:qt-construction-amortization}
\end{figure}
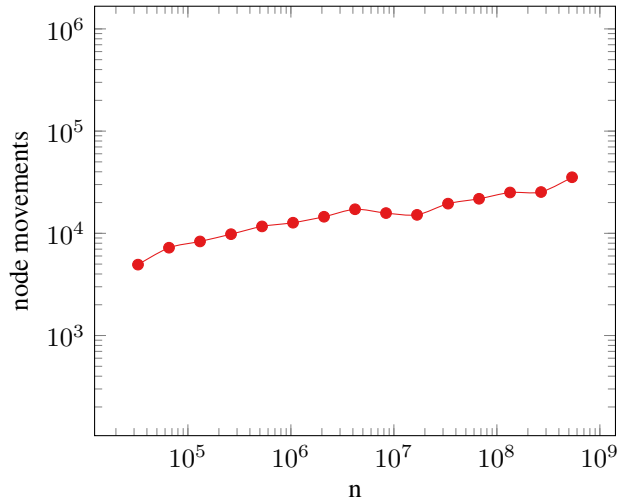

The fastest way to obtain a static graph together with a sequence of dynamic updates would be to generate the static graph first with the implementation of \citet{blsius_et_al:LIPIcs:2016:6367}, then the dynamic updates with our data structure.
Moving data into a new data structure takes at least linear time, Figure~\ref{plot:qt-construction-amortization} shows the number of node movements needed until the overhead of this preprocessing step is amortized by the faster queries.
For graphs with $10^5$ to $10^8$ vertices, the quadtree queries are faster if performing more than roughly $10^4$ iterations, a value that grows only slowly with increasing graph size.

The expected degree of a vertex in a random hyperbolic graph depends on its radials coordinate, a smaller radius leading to a higher degree.
Since the edge probabilities are symmetrical, vertices with small radius are more likely to be in the result set of a query.
Due to this effect, the central cells in the quadtree are examined much more often than cells on the periphery, see Figure~\ref{plot:cell-query-count}.
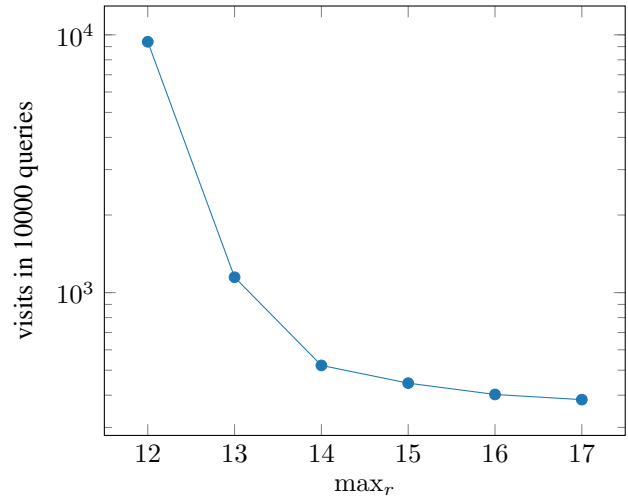
\begin{figure}
\centering
 \begin{tikzpicture}
  \begin{axis}[ymode=log,xlabel=$\max_r$,ylabel=visits in 10000 queries]
   \addplot[markedcolor,mark=*] table[ x = radius, y expr = \thisrow{queries}] {plots/quadtree-cell-query-count};%\addlegendentry{Probability of being examined in query.};
  \end{axis}
 \end{tikzpicture}
 \caption{Probability that a given cell is examined in an update step, depending on its maximum radial coordinate $\max_r$.
 Cells in the center of the polar disk are visited almost certainly, while cells in outer regions are visited rarely. Measurements are made on a random hyperbolic graph with $2^{13}$ vertices.}
 \label{plot:cell-query-count}
\end{figure}

Probably due to this effect, the query time can be significantly improved by deliberately imbalancing the polar quadtree and allocating less probability mass to the inner children.
The splitting radius $\mathrm{split}_\hyperbolic{}$ which divides the outer from the inner children of $T$, originally given in Eq.~\eqref{eq:radial-split}, is then governed by a balance parameter $b$:
\begin{equation}
\text{mid}_{r\hyperbolic{}} = \acosh((1-b)\cdot \cosh(\alpha\cdot\text{max}_{r_\hyperbolic{}}) + b\cdot \cosh(\alpha\cdot\text{min}_{r_\hyperbolic{}}))/\alpha
 \label{eq:splitting-nodes}
\end{equation}
The original behavior of Eq.~\eqref{eq:radial-split} is equivalent to setting $b$ to 0.5.
Choosing instead $b=0.001$, which yields an allocation of 0.1\% of the area to the inner two children and 99.9\% to the outer children, decreases running time by more than an order of magnitude compared to a balanced tree (Figure~\ref{plot:balance-benchmark}).
\begin{figure}
\centering
 \begin{tikzpicture}
  \begin{axis}[xmode=log, ymode=log, xlabel=share of area in inner children, ylabel = ms / query]
   \addplot[markedcolor,mark=*] table[x expr = 1 - \thisrow{balance}, y expr = \thisrow{time} / \thisrow{iter}] {plots/pnq-benchmark-balance};
  \end{axis}
 \end{tikzpicture}
 \caption{Influence of balance parameter on running time. Measurements are for a graph with $2^{23}$ vertices and averaged over 10000 queries.
 Deliberately imbalancing the quadtree improves running times by over one order of magnitude.}
 \label{plot:balance-benchmark}
\end{figure}
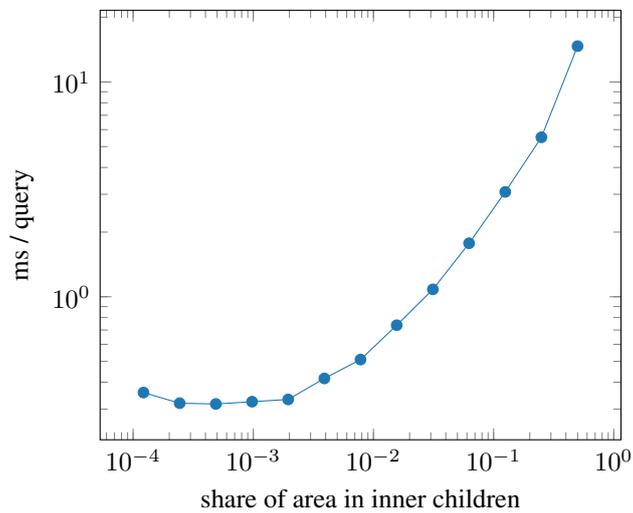

\section{Conclusions}
\label{sec:conclusion}

We described a new, gradual dynamic model for random hyperbolic graphs and proved its consistency.

To process dynamic graph updates, we formally defined the notion of probabilistic neighborhoods and presented a quadtree-based query algorithm for such neighborhoods in the Euclidean and hyperbolic plane.
Our analysis shows a time complexity of $O((|N(q,f)| + \sqrt{n})\log n)$ for $n$ points and a result set $|N(q,f)|$.
To our knowledge, our algorithm is the first to solve the problem asymptotically faster than pairwise distance probing.
These queries can be used to sample fast updates in dynamic models for RHGs.
In our experiments, our query algorithm is faster by up to two orders of magnitude than the data structure by \citet{blsius_et_al:LIPIcs:2016:6367}.
The proposed dynamic model has already been used in scaling experiments for dynamic network analysis algorithms~\cite{doi:10.1080/15427951.2016.1177802}.

The complexity results for probabilistic neighborhood queries hold for Euclidean geometry as well, making the query algorithm applicable to other sampling problems in spatial datasets and thus of independent interest.

%\clearpage
%\backmatter
\bibliographystyle{plainnat}
\bibliography{Bibliography}

\clearpage
\appendix

%%%%%%%%%%%%%%%%%%%%%%%%%%%%%%%%%
\newcommand{\lE}{\ensuremath{\mathrm{leftExtremum}}}
\newcommand{\rE}{\ensuremath{\mathrm{rightExtremum}}}

\section{Proof of Lemma 2 of von Looz et al. [2015]}
\label{sec:proof-thm-quadtree-height}
\begin{proof}
In a complete quadtree, $4^i$ cells exist at depth $i$. For analysis purposes only, we construct such 
a complete but initially empty quadtree of height $k = 3\cdot\lceil \log_4(n)\rceil$, which has at least $n^3$ leaf cells.
As seen in Lemma~\ref{lemma:node-cell-probabilities}, a given point has an equal chance to land in each leaf cell.
Hence, we can apply Lemma 6 of \citet{Looz2015HRG} with each leaf cell being a bin and a point being a ball.
(The fact that we can have more than $n^3$ leaf cells only helps in reducing the average load.)
From this we can conclude that, for $n$ sufficiently large, no leaf cell of the current tree contains more than 1 point with high probability (whp).
Consequently, the total quadtree height does not exceed $k = 3\cdot\lceil \log_4(n)\rceil \in O(\log n)$ whp.

Let $T'$ be the quadtree as constructed in the previous paragraph, starting with a complete quadtree of height~$k$ and splitting leaves when their capacity is exceeded.
Let $T$ be the quadtree created in our algorithm, starting with a root node, inserting points and also splitting leaves when necessary, growing the tree downward.

Since both trees grow downward as necessary to accommodate all points, but $T$ does not start with a complete quadtree of height~$k$, the set of quadtree nodes in $T$ is a subset of the quadtree nodes in $T'$.
Consequently, the height of $T$ is bounded by $O(\log n)$ whp as well.
\end{proof}

\section{Distance between Quadtree Cell and Point}
\label{sec:distance}
To calculate the upper bound $\overline{b}$ used in Algorithm~\ref{algo:quadnode-probabilistic}, we need a lower bound for the distance between the query point $q$ and any point in a given quadtree cell.
Since the quadtree cells are polar, the distance calculations might be unfamiliar and we show and prove them explicitly.
For the hyperbolic case, the distance calculations are shown in Algorithm~\ref{algo:hyperbolic-distances} and proven in Lemma~\ref{lemma:hyperbolic-distances}.
The Euclidean calculations are shown in Algorithm~\ref{algo:Euclidean-polar-distances} and proven in Lemma~\ref{lemma:Euclidean-polar-distances}.

%%%%%%%%%%%%%%%%%%%%%%%%%%%%%%%%%
\LinesNumbered
\begin{algorithm}
\KwIn{quadtree cell $C$ = ($\text{min}_r$, $\text{max}_r$, $\text{min}_\phi$, $\text{max}_\phi$), query point $q = (\phi_q, r_q)$}
\KwOut{infimum and supremum of hyperbolic distances $q$ to interior of $C$}
\tcc{start with corners of cell as possible extrema}
cornerSet = \{($\text{min}_\phi$, $\text{min}_r$), ($\text{min}_\phi$, $\text{max}_r$), ($\text{max}_\phi$, $\text{min}_r$), ($\text{max}_\phi$, $\text{max}_r$)\}\;
a = $\cosh(r_q)$\;
b = $\sinh{r_q}\cdot\cos(\phi_q-\text{min}_\phi)$\;
\tcc{Left/Right boundaries}
leftExtremum = $\frac{1}{2}\ln\left(\frac{a+b}{a-b}\right)$\label{line:left-extremum}\;
\If{$\text{min}_r < \lE < \text{max}_r$}{
add $(\text{min}_\phi, \lE)$ to cornerSet\;
}

b = $\sinh{r_q}\cdot\cos(\phi_q-\text{max}_\phi)$\;
rightExtremum = $\frac{1}{2}\ln\left(\frac{a+b}{a-b}\right)$\label{line:right-extremum}\;
\tcc{Top/bottom boundaries}
\If{$\text{min}_r < \rE < \text{max}_r$}{
add $(\text{max}_\phi, \rE)$ to cornerSet\;
}

\If{$\text{min}_\phi < \phi_q \text{max}_\phi$}{
add $(\phi_q, \text{min}_r)$ and $(\phi_q, \text{max}_r)$ to cornerSet\;
}
$\phi_{\text{mirrored}} = \phi_q + \pi \mod 2\pi$\;

\If{$\text{min}_\phi < \phi_{\text{mirrored}} < \text{max}_\phi$}{
add $(\phi_{\text{mirrored}}, \text{min}_r)$ and $(\phi_{\text{mirrored}}, \text{max}_r)$ to cornerSet\;
}
\tcc{If point is in cell, distance is zero:}
\If{$\text{min}_\phi \leq \phi_q < \text{max}_\phi \text{ AND }\text{min}_r \leq r_q < \text{max}_r$}{
infimum = 0\;
}
\Else{
infimum = $\min_{e \in \text{cornerSet}} \text{dist}_\hyperbolic{} (q, e)$\;
}
supremum = $\max_{e \in \text{cornerSet}} \text{dist}_\hyperbolic{} (q, e)$\;
\Return{infimum, supremum};

 \caption{Infimum and supremum of distance in a hyperbolic polar quadtree}
 \label{algo:hyperbolic-distances}
\end{algorithm}
%%%%%%%%%%%%%%%%%%%%%%%%%%%%%%%%%%%%%%%%%%%%%%%%%%%%%%%%%%%%%%%%%%

\begin{algorithm}
\KwIn{quadtree cell $C$ = ($\text{min}_r$, $\text{max}_r$, $\text{min}_\phi$, $\text{max}_\phi$), query point $q = (\phi_q, r_q)$}
\KwOut{infimum and supremum of Euclidean distances $q$ to interior of $C$}
\tcc{start with corners of cell as possible extrema}
 cornerSet = \{($\text{min}_\phi$, $\text{min}_r$), ($\text{min}_\phi$, $\text{max}_r$), ($\text{max}_\phi$, $\text{min}_r$), ($\text{max}_\phi$, $\text{max}_r$)\}\;
\tcc{Left/Right boundaries}
\lE = $r_q\cdot\cos(\text{min}_\phi - \phi_q)$\;\label{line:left-extremum-Euclidean}
\If{$\text{min}_r < \lE < \text{max}_r$}{
add $(\text{min}_\phi, \lE)$ to cornerSet\;
}

\rE = $r_q\cdot\cos(\text{max}_\phi - \phi_q)$\;\label{line:right-extremum-Euclidean}
\If{$\text{min}_r < \rE < \text{max}_r$}{
add $(\text{max}_\phi, \rE)$ to cornerSet\;
}

\tcc{Top/bottom boundaries}
\If{$\text{min}_\phi < \phi_q < \text{max}_\phi$}{
add $(\phi_q, \text{min}_r)$ and $(\phi_q, \text{max}_r)$ to cornerSet\;
}
$\phi_{\text{mirrored}} = \phi_q + \pi \mod 2\pi$\;

\If{$\text{min}_\phi < \phi_{\text{mirrored}} < \text{max}_\phi$}{
add $(\phi_{\text{mirrored}}, \text{min}_r)$ and $(\phi_{\text{mirrored}}, \text{max}_r)$ to cornerSet\;
}
\tcc{If point is in cell, distance is zero:}
\If{$\text{min}_\phi \leq \phi_q < \text{max}_\phi \text{ AND }\text{min}_r \leq r_q < \text{max}_r$}{
infimum = 0\;
}
\Else{
infimum = $\min_{e \in \text{cornerSet}} \text{dist}_\hyperbolic{} (q, e)$\;
}
supremum = $\max_{e \in \text{cornerSet}} \text{dist}_\hyperbolic{} (q, e)$\;
\Return{infimum, supremum};
 \caption{Infimum and supremum of distance in a Euclidean polar quadtree}
 \label{algo:Euclidean-polar-distances}
\end{algorithm}
%%%%%%%%%%%%%%%%%%%%%%%%%%%%%%%%%%%%%%%%%%%%%%%%%%%%%%%%%%%%%%%%%

% \pagebreak 

\begin{lemma}
 Let $C$ be a quadtree cell and $q$ a point in hyperbolic space.
 The first value returned by Algorithm~\ref{algo:hyperbolic-distances} is the distance of $C$ to $q$.
 \label{lemma:hyperbolic-distances}
\end{lemma}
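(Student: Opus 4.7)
The plan is to exploit the fact that the hyperbolic distance $d(q,\cdot)$ is continuous and that the cell $C$ is a compact set (after closure) in the polar plane, so its extrema over $C$ are attained. First I would dispose of the trivial case: if $q$ lies in the interior of $C$ (the explicit test at the end of the algorithm), the infimum is~$0$, as stated. Otherwise $q \notin \mathrm{int}(C)$, and since $d(q,\cdot)$ is a smooth function of $(\phi,r)$ away from $q$, any minimum over the closure of $C$ must be attained on its boundary, which decomposes into four smooth pieces: the two angular arcs $\phi\in\{\mathrm{min}_\phi,\mathrm{max}_\phi\}$ with $r\in[\mathrm{min}_r,\mathrm{max}_r]$, and the two radial arcs $r\in\{\mathrm{min}_r,\mathrm{max}_r\}$ with $\phi\in[\mathrm{min}_\phi,\mathrm{max}_\phi]$. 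Thus I need only argue that the candidate set assembled by the algorithm contains every critical point of $d(q,\cdot)$ on these four pieces as well as the four corners.

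The main computation is the derivation of the angular-boundary critical points, which yield the closed-form expressions on Lines~\ref{line:left-extremum} and~\ref{line:right-extremum}. Fix $\phi_p = \mathrm{min}_\phi$ and write the hyperbolic law of cosines
\begin{equation*}
\cosh d(q,p) \;=\; \cosh(r_q)\cosh(r_p) \;-\; \sinh(r_q)\sinh(r_p)\cos(\phi_q-\mathrm{min}_\phi).
\end{equation*}
Setting $a = \cosh(r_q)$ and $b = \sinh(r_q)\cos(\phi_q-\mathrm{min}_\phi)$ and differentiating in $r_p$, the stationarity condition reduces to $\tanh(r_p) = b/a$, i.e.\ $r_p = \tfrac{1}{2}\ln\tfrac{a+b}{a-b}$, matching Line~\ref{line:left-extremum}. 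The same computation with $\mathrm{max}_\phi$ in place of $\mathrm{min}_\phi$ gives Line~\ref{line:right-extremum}. The algorithm correctly guards these candidates by checking whether the extremum lies in $(\mathrm{min}_r,\mathrm{max}_r)$; if not, the minimum on that arc is attained at a corner already in the set.

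For the radial arcs I would argue more briefly: with $r_p$ fixed, the expression for $\cosh d(q,p)$ depends on $\phi_p$ only through the term $-\sinh(r_q)\sinh(r_p)\cos(\phi_q-\phi_p)$. Since $\sinh(r_q)\sinh(r_p)\ge 0$, this is minimized in $\phi_p$ exactly when $\cos(\phi_q-\phi_p)=1$, i.e.\ at $\phi_p=\phi_q$, and maximized at $\phi_p = \phi_q + \pi \pmod{2\pi} = \phi_{\mathrm{mirrored}}$. The algorithm adds the corresponding points on the two radial arcs precisely when these angles fall inside $[\mathrm{min}_\phi,\mathrm{max}_\phi]$, which is exactly when the interior critical point lies in the arc.

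Combining these three observations, every local extremum of $d(q,\cdot)$ on each boundary piece is either a corner or one of the points the algorithm adds to \texttt{cornerSet}. The hard part, and the only one that I expect to require care, is justifying that the stationarity condition for $d$ along an angular arc coincides with that for $\cosh d$ (which follows from monotonicity of $\cosh$ on $[0,\infty)$ together with $d\ge 0$), and then noting that the unique real solution of $\tanh(r_p)=b/a$ exists iff $|b|<a$, which always holds since $|\cos(\cdot)|\le 1$ and $\cosh>|\sinh|$; when this solution lies outside $[\mathrm{min}_r,\mathrm{max}_r]$, the function $r_p\mapsto\cosh d$ is monotonic on the arc and its extremum is a corner, already present. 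Taking the minimum of $d(q,\cdot)$ over the finite \texttt{cornerSet} therefore returns $\inf_{p\in C} d(q,p)$, proving the claim.
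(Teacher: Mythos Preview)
Your proposal is correct and follows essentially the same approach as the paper: reduce to the boundary, split into the four arcs, and on each arc find the critical points of $\cosh d$ via the same substitutions $a=\cosh r_q$, $b=\sinh r_q\cos(\phi_q-\phi_A)$, arriving at the identical closed form $r_p=\tfrac12\ln\tfrac{a+b}{a-b}$ for the angular arcs and at $\phi_p\in\{\phi_q,\phi_q+\pi\}$ for the radial arcs. Your version is in fact slightly tidier in one respect: you observe that $|b|<a$ always holds (because $|\cos|\le 1$ and $\cosh>|\sinh|$), which guarantees the logarithm is real; the paper instead singles out the degenerate case $a=b$ without addressing the sign of $(a+b)/(a-b)$.
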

%%%%%%%%%%%%%%%%%%%%%%%%%%%%%%%%%%%%%%%%%%%%%%%%%%%%%%%%%%%%%%%%%%%

\begin{proof}
When $q$ is in $C$, the distance is trivially zero.
Otherwise, the distance between $q$ and $C$ can be reduced to the distance between $q$ and the boundary of $C$, $\delta C$:
\begin{equation}
\text{dist}_\hyperbolic{} (C, q) = \text{dist}_\hyperbolic{} (\delta C, q) = \inf_{p \in \delta C} \text{dist}_\hyperbolic{} (p, q) 
\end{equation}
Since the boundary is closed, this infimum is actually a minimum:
\begin{equation}
\text{dist}_\hyperbolic{} (C, q) = \inf_{p \in \delta C} \text{dist}_\hyperbolic{} (p, q) = \min_{p \in \delta C} \text{dist}_\hyperbolic{} (p, q) 
\end{equation}
The boundary of a quadtree cell consists of four closed curves:
\begin{itemize}
 \item left: $\{(\text{min}_\phi, r) |  \text{min}_r \leq r \leq \text{max}_r \} $
 \item right: $\{(\text{max}_\phi, r) |  \text{min}_r \leq r \leq \text{max}_r \} $
 \item lower: $\{(\phi, \text{min}_r) | \text{min}_\phi \leq \phi \leq \text{max}_\phi \} $
 \item upper: $\{(\phi, \text{max}_r) | \text{min}_\phi \leq \phi \leq \text{max}_\phi \} $
\end{itemize}
We write the distance to the whole boundary as a minimum over the distances to its parts:
\begin{equation}
 \text{dist}_\hyperbolic{} (\delta C, q) = \min_{A \in \{\text{left, right, lower, upper} \}} \text{dist}_\hyperbolic{} (A, q)
\end{equation}

All points on an angular boundary curve $A$ have the same angular coordinate $\phi_A$.
Let $d_A(r) = \mathrm{acosh}(\cosh(r)\cosh(r_q) - \sinh(r)\sinh(r_q) \cos(\phi_q - \phi_A))$ for a fixed point $q$.
The distance $\text{dist}_\hyperbolic{} (A, q)$ can then be reduced to:
\begin{align}
 \text{dist}_\hyperbolic{} (A, q) &= \min_{\text{min}_r \leq r \leq \text{max}_r} d_A(r)\\
\end{align}
The minimum of $d_A$ on $A$ is the minimum of $d_A(\text{min}_r),$  $d_A(\text{max}_r)$ and the value at possible extrema.
To find the extrema, we define a function $g(r) = \cosh(r)\cosh(r_q) - \sinh(r)\sinh(r_q) \cos(\phi_q - \phi_A)$.
Since $\mathrm{acosh}$ is strictly monotone, $g(r)$ has the same extrema as $d_A(r)$.

The factors $\cosh(r_q)$ and $\sinh(r_q) \cos(\phi_q - \phi_A)$ do not depend on $r$, to increase readability we substitute them with the constants $a$ and $b$:
\begin{align}
 a &= \cosh(r_q)\\
 b &= \sinh(r_q) \cos(\phi_q - \phi_A)\\
 d_A(r) &= \mathrm{acosh}(\cosh(r)\cdot a - \sinh(r)\cdot b)\\
 g(r) &= \cosh(r)\cdot a - \sinh(r)\cdot b
\end{align}
The derivative of $g$ is thus:
\begin{equation}
g'(r) = \sinh(r)\cdot a - \cosh(r)\cdot b = \frac{e^r-e^{-r}}{2}\cdot a - \frac{e^r+e^{-r}}{2}\cdot b
\end{equation}
With some transformations, we get the roots of $g'(r)$:
\paragraph*{Case $a=b$:}
 \begin{align}
  g'(r) &= 0 \Leftrightarrow\\
  \frac{e^r-e^{-r}}{2}\cdot a &= \frac{e^r+e^{-r}}{2}\cdot a\\
  e^r-e^{-r} &= e^r+e^{-r}\\
  -e^{-r} &= e^{-r}\\
  e^{-r} &= 0\\
 \end{align}
For $a=b$, $d_A$ has no extrema in $\mathbb{R}$.

\paragraph*{ $a\not=b$:}

\begin{align}
g'(r) &= 0 \Leftrightarrow\\
\frac{e^r-e^{-r}}{2}\cdot a &= \frac{e^r+e^{-r}}{2}\cdot b\Leftrightarrow\\
a e^r-ae^{-r} &=  be^r+be^{-r}\Leftrightarrow\\
(a-b)e^r - (a+b)e^{-r} &= 0\Leftrightarrow\\
(a-b)e^r &= (a+b)e^{-r}\Leftrightarrow\\
e^r &= \frac{a+b}{a-b}e^{-r}\Leftrightarrow\\
e^{2r} &= \frac{a+b}{a-b}\Leftrightarrow\\
2r &= \ln\left(\frac{a+b}{a-b}\right)\Leftrightarrow\\
r &= \frac{1}{2}\ln\left(\frac{a+b}{a-b}\right)\label{eq:left-right-extremum}
\end{align}
For  $a\not=b$, $d_A$ has a single extremum at $\frac{1}{2}\ln\left(\frac{a+b}{a-b}\right)$.
This extremum is calculated for both angular boundaries in Lines \ref{line:left-extremum} and \ref{line:right-extremum} of Algorithm~\ref{algo:hyperbolic-distances}.

If $d(r)$ has an extremum $x$ in $A$, the minimum of $d_A(r)$ on $A$ is $\min \{d_A(\text{min}_r)$,  $d_A(\text{max}_r)$, $d_A(x)\}$,
otherwise it is $\min \{d_A(\text{min}_r)$,  $d_A(\text{max}_r)\}$.

\vspace{1\baselineskip}
A similar approach works for the radial boundary curves. Let $B$ be a radial boundary curve at radius $r_B$ and angular bounds $\text{min}_\phi$ and $\text{max}_\phi$.
Let $d_B(\phi)$ be the distance to $q$ restricted to radius $r_B$.
\begin{align}
d_B &: [0,2\pi] \rightarrow \mathbb{R}\\
 d_B(\phi) &= \mathrm{acosh}(\cosh(r_B)\cosh(r_q) - \sinh(r_B)\sinh(r_q) \cos(\phi_q - \phi))
\end{align}
Similarly to the angular boundaries, we define some constants and a function $g(\phi)$ with the same extrema as $d_B$:
\begin{align}
  a &= \cosh(r_B)\cosh(r_q)\\
  b &= \sinh(r_B)\sinh(r_q)\\
 g(\phi) &= a - b \cos(\phi_q - \phi)
\end{align}

\paragraph*{Case: $b = 0$:}
\begin{align}
 b &= \sinh(r_B)\sinh(r_q) = 0 \Leftrightarrow\\
 g(\phi) &= a 
\end{align}
Since $g$ is constant, no extrema exist.

\paragraph*{Case: $b \not= 0$:}
We obtain the extrema with some transformations:
\begin{align}
 g'(\phi) &= -b \sin(\phi_q - \phi)\\
 g'(\phi) &= 0 \Leftrightarrow\\
 \sin(\phi_q - \phi) &= 0 \Leftrightarrow\\
 \phi &= \phi_q \mod \pi
\end{align}
The distance function $d_B(\phi)$ thus has two extrema.

The minimum of $d_B(r)$ on $B$ is then:
\begin{equation}
\min_{r \in B} d_B(r) = \min \{d_B(\text{min}_r), d_B(\text{max}_r)\} \cup \{d_B(\phi) |  \text{min}_\phi \leq \phi \leq \text{max}_\phi \wedge \phi = \phi_q \mod \pi \}
\end{equation}

The distance $\text{dist}_\hyperbolic{} (C, q)$ can thus be written as the minimum of four to ten point-to-point distances. 
Algorithm~\ref{algo:hyperbolic-distances} collects the arguments for these distances in the variable cornerSet and returns the distance minimum as the first return value.
\end{proof}
%%%%%%%%%%%%%%%%%%%%%%%%%%%%%%%%%%%%%%%%%%%%%%%%%%%%%%%%%%%%%%%%%%

\begin{lemma}
 Let $T$ be a polar quadtree in Euclidean space, $c$ a quadtree cell of $T$ and $q$ a point in Euclidean space.
 The first value returned by Algorithm~\ref{algo:Euclidean-polar-distances} is the distance of $c$ to $q$.
 \label{lemma:Euclidean-polar-distances}
\end{lemma}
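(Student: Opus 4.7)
The plan is to mirror the structure of the proof of Lemma~\ref{lemma:hyperbolic-distances} almost verbatim, replacing the hyperbolic distance formula with its Euclidean counterpart. First I would observe that if $q \in C$ then the returned infimum is $0$ by the explicit check in the algorithm, so we may assume $q \notin C$. Since $C$ is closed and bounded, $\text{dist}_\Euclidean(C, q) = \min_{p \in \delta C} \text{dist}_\Euclidean(p, q)$, and $\delta C$ decomposes into the two \emph{angular} boundary curves (fixed $\phi \in \{\mathrm{min}_\phi, \mathrm{max}_\phi\}$, $r \in [\mathrm{min}_r, \mathrm{max}_r]$) and the two \emph{radial} boundary curves (fixed $r \in \{\mathrm{min}_r, \mathrm{max}_r\}$, $\phi \in [\mathrm{min}_\phi, \mathrm{max}_\phi]$). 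I would then argue that the minimum over each curve is attained either at one of its endpoints (the four corners, already in \texttt{cornerSet}) or at an interior critical point, so it suffices to show that Algorithm~\ref{algo:Euclidean-polar-distances} enumerates exactly those interior critical points that lie inside the curve's parameter range.

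For the angular curves, using the Euclidean law of cosines, the squared distance as a function of $r$ at fixed angle $\phi_A$ is
\begin{equation}
d_A(r)^2 = r^2 + r_q^2 - 2\, r\, r_q \cos(\phi_A - \phi_q).
\end{equation}
Differentiating and setting the derivative to zero gives the unique critical point $r^* = r_q \cos(\phi_A - \phi_q)$, which matches the assignments to \lE{} and \rE{} in Lines~\ref{line:left-extremum-Euclidean} and~\ref{line:right-extremum-Euclidean}. The algorithm adds the corresponding boundary point to \texttt{cornerSet} exactly when $r^* \in (\mathrm{min}_r, \mathrm{max}_r)$, which is precisely the condition for the critical point to lie in the open interior of the curve. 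Since the second derivative of $d_A^2$ is $2 > 0$, this critical point is a minimum, so any interior minimum of $d_A$ on the curve is captured.

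For the radial curves at fixed radius $r_B$, the squared distance is
\begin{equation}
d_B(\phi)^2 = r_B^2 + r_q^2 - 2\, r_B\, r_q \cos(\phi - \phi_q),
\end{equation}
whose derivative vanishes exactly when $\sin(\phi - \phi_q) = 0$, i.e.\ at $\phi = \phi_q$ and $\phi = \phi_q + \pi \pmod{2\pi}$. These are precisely the two values tested in the algorithm (using the variable $\phi_\mathrm{mirrored}$), and each is inserted into \texttt{cornerSet} together with both radial bounds exactly when it lies in $(\mathrm{min}_\phi, \mathrm{max}_\phi)$. Degenerate cases ($r_B = 0$ or $r_q = 0$, which make $d_B$ constant in $\phi$) do not harm correctness, since any point of the curve then attains the same distance and the corner values already suffice.

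Combining these observations, $\texttt{cornerSet}$ at the final minimization line contains the four corners together with every interior critical point of the distance function on each of the four boundary curves. Hence the minimum of $\text{dist}_\Euclidean(q, \cdot)$ over $\texttt{cornerSet}$ equals the minimum over all of $\delta C$, which equals $\text{dist}_\Euclidean(C, q)$ as required. The only place that needs care is the case distinction for degenerate critical points and the conventions at the cell's angular wrap-around; these are routine and follow exactly the structure used in the hyperbolic proof, so no genuinely new obstacle is expected.
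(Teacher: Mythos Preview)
Your proposal is correct and follows essentially the same approach as the paper's proof: both reduce to the boundary, split into the two angular and two radial boundary curves, differentiate the squared Euclidean distance (law of cosines in polar form) to locate the interior critical points $r^* = r_q\cos(\phi_A-\phi_q)$ and $\phi \equiv \phi_q \pmod{\pi}$, and conclude that the minimum over \texttt{cornerSet} is the true distance. Your version is in fact slightly more careful (noting the second derivative and the degenerate cases), but there is no substantive difference in strategy.
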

%%%%%%%%%%%%%%%%%%%%%%%%%%%%%%%%%%%%%%%%%%%%%%%%%%%%%%%%%%%%%%%%%%%

\begin{proof}
The general distance equation for polar coordinates in Euclidean space is
\begin{equation}
 f(r_p, r_q, \phi_p, \phi_q) = \sqrt{r_p^2 + r_q^2 -2 r_p r_q \cos(\phi_p - \phi_q)}
 \label{eq:distance-equation}
\end{equation}

If the query point $q$ is within $C$, the distance is zero.
Otherwise, the distance between $q$ and $C$ is equal to the distance between $q$ and the boundary of $C$.
We consider each boundary component separately and derive the extrema of the distance function.

\paragraph{Radial Boundary.}
When considering the radial boundary, everything but one angle is fixed:
\begin{equation}
  f(\phi_p) = \sqrt{r_p^2 + r_q^2 -2 r_p r_q \cos(\phi_p - \phi_q)}
  \label{eq:distance-only-angle}
\end{equation}
Since the distance is positive and the square root is a monotone function, the extrema of the previous function are at the same values as the extrema of its square $g(\phi)$:
\begin{equation}
  g(\phi_p) =r_p^2 + r_q^2 -2 r_p r_q \cos(\phi_p - \phi_q)
  \label{eq:distance-only-angle-squared}
\end{equation}
We set the derivative to zero to find the extrema:
\begin{align}
  g'(\phi_p) &= 0 \Leftrightarrow\\
  2 r_p r_q \sin(\phi_p - \phi_q)\cdot (\phi_p - \phi_q) &= 0\\
  \phi_p = \phi_q \mod \pi
  \label{eq:distance-only-angle-squared-derivative}
\end{align}

\paragraph{Angular Boundary.}
Similar to the radial boundary, we fix everything but the radius:
\begin{equation}
  f(r_p) = \sqrt{r_p^2 + r_q^2 -2 r_p r_q \cos(\phi_p - \phi_q)}
  \label{eq:distance-only-radius}
\end{equation}

Again, we define a helper function with the same extrema:
\begin{equation}
  g(r_p) =r_p^2 + r_q^2 -2 r_p r_q \cos(\phi_p - \phi_q)
  \label{eq:distance-only-radius-squared}
\end{equation}
We set the derivative to zero to find the extrema:
\begin{align}
  g'(r_p) &= 0 \Leftrightarrow\\
  2r_p - 2r_q\cos(\phi_p - \phi_q) &= 0\Leftrightarrow\\
  r_p &= r_q\cos(\phi_p - \phi_q)\Rightarrow\\
  g(r_p) &= r_p^2 + r_q^2 -2 r_p^2\\ &= r_q^2 - r_p^2\\ &= r_q^2(1-\cos(\phi_p-\phi_q))
  \label{eq:distance-only-angle-radius-derivative}
\end{align}

An extremum of $f$ on the boundary of cell $c$ is either at one of its corners or at the points derived in Eq.~(\ref{eq:distance-only-angle-squared-derivative}) or Eq.~(\ref{eq:distance-only-angle-radius-derivative}).
If $q\not\in c$, the minimum over these points and the corners, as computed by Algorithm~\ref{algo:Euclidean-polar-distances}, is the minimal distance between $q$ and any point in $c$.
If $q$ is contained in $c$, the distance is trivially zero.
\end{proof}
\end{document}